\documentclass[11pt,a4paper]{article}
\usepackage[margin=25mm]{geometry}
\usepackage[T1]{fontenc}
\usepackage{lmodern}
\usepackage[utf8]{inputenc}
\usepackage{amsmath,amssymb,amsthm,mathtools,bm}
\usepackage{booktabs,tabularx,array,enumitem}
\usepackage{microtype,tikz}
\usepackage[colorlinks=true,linkcolor=blue!40!black,citecolor=blue!40!black,urlcolor=blue!50!black]{hyperref}
\setlist{itemsep=2pt,topsep=3pt,leftmargin=*}
\newtheorem{theorem}{Theorem}
\newtheorem{lemma}[theorem]{Lemma}
\newtheorem{proposition}[theorem]{Proposition}
\newtheorem{corollary}[theorem]{Corollary}
\theoremstyle{remark}\newtheorem{remark}[theorem]{Remark}
\newcommand{\Tr}{\operatorname{Tr}}
\newcommand{\id}{\operatorname{id}}
\newcommand{\B}{\mathcal B}
\newcommand{\D}{\mathcal D}
\newcommand{\N}{\mathbb N}
\newcommand{\R}{\mathbb R}
\newcommand{\C}{\mathbb C}
\newcommand{\ket}[1]{|#1\rangle}
\newcommand{\bra}[1]{\langle#1|}
\newcommand{\vac}{\ket{0}\!\bra{0}}
\newcommand{\norm}[1]{\left\lVert#1\right\rVert}
\newcommand{\dd}{\mathrm d}
\hypersetup{pdftitle={Equality in the Bosonic Quantum Entropy Power Inequality},
pdfsubject={Finite-energy equality characterization for bosonic beam-splitter entropy inequalities}}
\title{Equality in the Bosonic Quantum Entropy Power Inequality}
\author{Yonglong Li\\School of Mathematics and Statistics\\ Xi'an Jiaotong University\\ liyonglong@xjtu.edu.cn}
\date{}
\begin{document}
\maketitle
\begin{abstract}
The bosonic quantum entropy power inequality bounds the entropy of a
beam-splitter output in terms of the input entropies. We determine its
complete equality class among independent inputs of finite mean photon
number: for any number of modes and any transmissivity strictly between
zero and one, equality in either the linear or the exponential form
holds if and only if the inputs are Gaussian with the same covariance
matrix, allowing arbitrary displacements. The main step shows that
equality for one output forces the two outputs to be independent:
a thermal auxiliary converts equality into preservation of mutual
information by a fixed noisy channel, and a two-outcome instrument
shows that preservation of mutual information requires a product state
with the reference.
The quantum Darmois--Skitovich theorem then gives Gaussianity, while
a central-limit argument reduces exponential equality to linear equality.
\end{abstract}
\noindent\textbf{Keywords:}
Quantum entropy power inequality, Gaussian states, beam splitter,
mutual information, equality conditions.

\section{Introduction}\label{sec:introduction}

Consider two independent $m$-mode bosonic systems, $A$ and $B$, mixed
on a beam splitter of transmissivity $0<\eta<1$. If $R_X$ denotes the
vector of position and momentum quadratures of system $X$, the outputs
are
\begin{align}
 R_C&=\sqrt\eta R_A+\sqrt{1-\eta}R_B,\notag\\
 R_E&=-\sqrt{1-\eta}R_A+\sqrt\eta R_B.
 \label{eq:intromixing}
\end{align}
The quantum entropy power inequality (qEPI) lower-bounds the entropy
of the retained output $C$:
\begin{align}
 S(C)&\geq\eta S(A)+(1-\eta)S(B),\label{eq:introlinear}\\
 e^{S(C)/m}&\geq\eta e^{S(A)/m}+(1-\eta)e^{S(B)/m}.
 \label{eq:introexponential}
\end{align}
Here $S$ is the von Neumann entropy, with natural logarithms. We ask:
which independent inputs make either bound exact?

K\"onig and Smith~\cite{KS,KScorrection} proved
\eqref{eq:introlinear} for every transmissivity and
\eqref{eq:introexponential} for a balanced beam splitter.
De Palma, Mari, and Giovannetti~\cite{DPMG} proved the exponential
inequality for every transmissivity. The extension to several independent
inputs under general linear mixing is due to De Palma, Mari, Lloyd,
and Giovannetti~\cite{DPMultimode}. These results provide the
inequalities but not a complete finite-energy equality
characterization. De Palma \emph{et al.} observed that Gaussian inputs
with proportional covariance matrices fail to attain the exponential
bound unless their entropies agree~\cite[discussion following Eq.~(5)]{DPMG}.
We establish the complete characterization here.

Equality questions distinguish this quantum problem from its classical
counterpart. For independent random vectors with finite covariance
matrices and finite differential entropies, equality in the classical
EPI of Shannon and Stam~\cite{Stam,DCT} characterizes Gaussian inputs
with proportional covariances.
The linear-entropy form requires identical covariances
\cite[Sec.~II-E1]{Rioul}. In the quantum problem, both forms require
identical covariance matrices.

\subsection{Main results}

Finite energy means finite expectation of total photon number.
It guarantees finite covariance matrices and finite input and output
entropies. Let $U_\eta$ denote the beam-splitter unitary in
\eqref{eq:intromixing}.

\begin{theorem}[Linear equality]\label{thm:main}
Let $\rho_A,\rho_B$ be independent finite-energy $m$-mode states,
where $m\geq1$ and $0<\eta<1$, and set
$\omega_{CE}=U_\eta(\rho_A\otimes\rho_B)U_\eta^\dagger$.
The following statements are equivalent:
\begin{enumerate}[label=(\roman*)]
\item $S(C)=\eta S(A)+(1-\eta)S(B)$;
\item $\omega_{CE}=\omega_C\otimes\omega_E$;
\item $\rho_A,\rho_B$ are Gaussian states with the same covariance matrix.
\end{enumerate}
The input displacement vectors may differ.
\end{theorem}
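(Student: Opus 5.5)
We prove the cycle (iii)$\Rightarrow$(ii)$\Rightarrow$(i)$\Rightarrow$(ii)$\Rightarrow$(iii). The routine implications come first.

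\emph{(iii)$\Rightarrow$(ii).} Displacements factor through $U_\eta$ as local displacements on $C$ and $E$, so they can be removed. Suppose both inputs are centered Gaussian with covariance $V$. The output $\omega_{CE}$ is then Gaussian, with covariance obtained by conjugating $V\oplus V$ by the orthogonal symplectic matrix of \eqref{eq:intromixing}. The cross block is $-\sqrt{\eta(1-\eta)}V+\sqrt{\eta(1-\eta)}V=0$, and a Gaussian state with vanishing cross covariance is a product. Hence $\omega_{CE}=\omega_C\otimes\omega_E$.

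\emph{(ii)$\Rightarrow$(i).} If $\omega_{CE}$ is a product, the two outputs are independent. Apply the linear qEPI \eqref{eq:introlinear} to both outputs, where $E$ is the transmissivity-$(1-\eta)$ output with roles swapped:
\begin{align}
S(C)&\ge \eta S(A)+(1-\eta)S(B),\notag\\
S(E)&\ge (1-\eta)S(A)+\eta S(B).\notag
\end{align}
Summing gives $S(C)+S(E)\ge S(A)+S(B)$. Unitary invariance and the product form give $S(C)+S(E)=S(CE)=S(A)+S(B)$. So both inequalities are equalities, which is (i).

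\emph{(i)$\Rightarrow$(ii), the main step.} We follow the strategy announced in the abstract. Purify $\rho_A$ with a reference $R$. The key is an auxiliary thermal input: replace $B$ by a mixture representation, or add a thermal state in a further beam splitter. The aim is to express the deficit $S(C)-\eta S(A)-(1-\eta)S(B)$ as a difference of mutual informations, namely $I(R{:}A)$ minus $I(R{:}\Phi(A))$ for a fixed noisy Gaussian channel $\Phi$.

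The intended route uses the König–Smith heat-flow/Fisher-information proof. Along the flow $\rho\mapsto \mathcal N_t(\rho)$ (thermal noise addition), the linear qEPI deficit is an integral of a nonnegative Fisher-information quantity. That quantity can be rewritten as the derivative of the mutual information with a reference, via de Bruijn's identity. Equality at $t=0$ then forces equality of $I(R{:}\,\cdot\,)$ across a fixed channel $\Phi$, which is built from the beam splitter with $B$ plus thermal noise.

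Next we show that preservation of mutual information forces a product state. Use a two-outcome instrument: a measurement $\{M,1-M\}$ on the channel's environment, or the decomposition of thermal noise as a mixture of two channels. Strict concavity or the equality conditions in strong subadditivity (Petz recovery) then make $\omega_{CE}$ factorize.

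This step is the main obstacle. It requires identifying the right auxiliary so that the equality condition transfers to a data-processing equality, and it requires handling infinite dimensions under finite energy, where all entropies are finite. I would first try to derive it directly from the Petz recoverability criterion. Saturation of data processing for the channel $A\mapsto C$ with environment $E$ gives a recovery map. Combined with the Gaussian-covariance structure of $U_\eta$, this should force $I(C{:}E)=0$.

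\emph{(ii)$\Rightarrow$(iii).} Independence of the two linear combinations \eqref{eq:intromixing} of independent inputs, with both coefficients nonzero, is exactly the hypothesis of the quantum Darmois–Skitovich theorem. That theorem yields that $\rho_A$ and $\rho_B$ are Gaussian. Applying (ii) to the Gaussian output, the vanishing cross covariance $\sqrt{\eta(1-\eta)}(V_B-V_A)=0$ gives $V_A=V_B$, while the displacements remain arbitrary.
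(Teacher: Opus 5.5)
\textbf{There is a genuine gap: the proposal does not prove (i)$\Rightarrow$(ii), which is the only new content of the theorem.}

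Your three routine implications are fine and agree with the paper:
\begin{itemize}
\item (iii)$\Rightarrow$(ii) follows from the vanishing cross block of the output covariance.
\item (ii)$\Rightarrow$(i) follows by summing the two linear qEPIs. For $E$ you need the entropy-preserving parity map $R_A\mapsto-R_A$; together this is the identity $I(C;E)=L_C+L_E$.
\item (ii)$\Rightarrow$(iii) is the quantum Darmois--Skitovich theorem.
\end{itemize}

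For (i)$\Rightarrow$(ii), what you give is a menu of strategies, not an argument, and the framing you choose cannot work. Suppose zero deficit meant $I(R;A)=I(R;\Phi(A))$ for a purification $R$ of $A$, and suppose preservation of mutual information forced a product. Then $\rho_{RA}$ would be a product and $\rho_A$ would be pure. But identical thermal inputs $\tau_N\otimes\tau_N$ with $N>0$ satisfy (i) and are mixed. The reference has to be correlated with $E$; it cannot purify $A$.

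The tools you list do not close the step either:
\begin{itemize}
\item Hypothesis (i) does not assert saturation of any data-processing inequality for $A\to C$, so there is no recovery map to invoke.
\item Along the K\"onig--Smith heat flow, zero deficit would at best give equality in the quantum Fisher-information (Stam) inequality for the flowed states. You would then need the equality cases of that inequality, which is essentially the original problem.
\item Even a genuine mutual-information equality across an entanglement-breaking noise channel yields, via Petz, only separability, not a product.
\end{itemize}

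The paper supplies two missing ideas.

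\emph{First, it turns (i) into an information equality by doubling.} Pair $(\rho_A,\rho_B)$ with the auxiliary pair $(\tau_2,\tau_2)$, which has zero deficit and is invariant under $U_\eta$. Lemma~\ref{lem:doubling} then writes the zero total deficit as the sum of three nonnegative terms: the plus-pair deficit; the conditional deficit of the minus pair given $M=A_+B_+$, which is nonnegative by De Palma--Trevisan's conditional qEPI since $I(A_-;B_-\mid M)=0$; and $I(C_-;E_+\mid C_+)$. So this conditional mutual information vanishes. Running the same circuit with the $\eta$-beam splitters first shows that the thermal pair acts as the fixed local channel $\Phi=\B_{1/2}(\cdot\otimes\tau_2)$ on both $C$ and $E$. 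The vanishing conditional mutual information then becomes $I(C;J)_\sigma=I(C_+;J)_{(\Phi\otimes\id)(\sigma)}$, with reference $J=E_+$ and $\sigma_{CJ}=(\id_C\otimes\Phi)(\omega_{CE})$.

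\emph{Second, it proves that $\Phi$ strictly loses information on every non-product state} (Proposition~\ref{prop:strictMI}). This uses the heterodyne measure-and-prepare form of $\Phi$ and its vacuum component $\pi_z\geq c(z)\vac$. The resulting two-outcome instrument has a branch of positive probability that outputs the fixed state $\vac$, which gives $I(X;K)-I(Y;K)\geq p_1I(X';K)_{\omega^1}$. Its filter $G^{1/2}$, with $G=3^{-m}2^{-H_m}$, is injective, so a product after filtering forces a product before.

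This makes $(\id_C\otimes\Phi)(\omega_{CE})$ a product. Cancelling the nonvanishing multiplier $e^{-5|w|^2/4}$ in the characteristic function then removes the noise on $E$. Your instrument idea points in this direction, but without a fixed-output branch and an injective filter it does not yield a product.
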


The equivalence between product outputs and Gaussian inputs is the
quantum Darmois--Skitovich characterization
\cite[Theorem~7, Corollary~8]{Cuesta}, with a one-mode predecessor
in~\cite{Springer}. The new implication is (i)$\Rightarrow$(ii).
It applies to arbitrary mixed inputs and includes Gaussian states
with symplectic eigenvalues equal to $1/2$.

\begin{theorem}[Exponential equality]\label{thm:exp}
Let $\rho_A,\rho_B$ be independent finite-energy $m$-mode states,
where $m\geq1$ and $0<\eta<1$. For the retained beam-splitter output $C$,
\[
 e^{S(C)/m}=\eta e^{S(A)/m}+(1-\eta)e^{S(B)/m}
\]
if and only if the inputs are Gaussian with the same covariance matrix,
allowing arbitrary displacements.
\end{theorem}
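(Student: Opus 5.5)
The plan is to reduce exponential equality to linear equality and then invoke Theorem~\ref{thm:main}. The "if" direction is a direct computation. If $\rho_A,\rho_B$ are Gaussian with the same covariance matrix $V$, then $R_C$ has covariance $\eta V+(1-\eta)V=V$ and Gaussian characteristic function. Hence $\omega_C$ is a displaced copy of the same Gaussian state, and $S(C)=S(A)=S(B)$. Both sides of the exponential identity then equal $e^{S(A)/m}$.

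For the "only if" direction I would first show that exponential equality forces $S(A)=S(B)$. With this, concavity-free algebra finishes the argument. If $S(A)=S(B)=s$, exponential equality reads $e^{S(C)/m}=e^{s/m}$, so $S(C)=s=\eta S(A)+(1-\eta)S(B)$. This is linear equality, and Theorem~\ref{thm:main} (i)$\Rightarrow$(iii) gives the conclusion.

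To force $S(A)=S(B)$, I will use a central-limit argument, as announced in the abstract. Let $n$ be a power of $2$. Take $n$ independent copies of $\rho_A$, combine them with balanced beam splitters in a binary tree, and call the resulting state $\rho_A^{(n)}$. Then $S(A^{(n)})\geq S(A)$ is nondecreasing in $n$ by the linear qEPI at $\eta=1/2$. By the quantum central limit theorem (Cushen--Hudson), which applies under finite second moments, $\rho_A^{(n)}$ converges to the Gaussian state $\gamma_A$ with the same first and second moments. Entropy convergence $S(A^{(n)})\to S(\gamma_A)$ follows from monotonicity plus the maximum-entropy bound at fixed covariance, together with lower semicontinuity.

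The key step is propagating equality along this limit. Exponential equality for the pair $(A,B)$ at transmissivity $\eta$ should imply exponential equality for $(A^{(n)},B^{(n)})$. The reason is that the beam splitter acting on $n$ independent pairs commutes with the balanced tree, so the output $C^{(n)}$ is the tree combination of copies of $C$. Then I need
\[
 e^{S(C^{(n)})/m}\ge \eta e^{S(A^{(n)})/m}+(1-\eta)e^{S(B^{(n)})/m}
\]
to hold with equality. I expect this to be the main obstacle, since the entropies are not additive through the tree. An alternative that avoids tracking equality through $n$ works with the gaps directly. By the linear qEPI, $S(C)\geq \eta S(A)+(1-\eta)S(B)$, while exponential equality combined with strict convexity of $x\mapsto e^{x/m}$ gives
\[
 e^{S(C)/m}=\eta e^{S(A)/m}+(1-\eta)e^{S(B)/m}\geq e^{(\eta S(A)+(1-\eta)S(B))/m}.
\]
Equality holds in the last step iff $S(A)=S(B)$. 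So exponential equality implies linear inequality with slack unless $S(A)=S(B)$, and I must exclude the case $S(A)\neq S(B)$ with $S(C)$ strictly above the linear bound but exactly at the exponential bound.

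To exclude that case, I would apply the exponential qEPI to rescaled inputs. Precisely, I would use the De Palma--Mari--Giovannetti proof structure: the exponential form is derived from the linear form at an optimized auxiliary transmissivity, via heat-semigroup/CLT. Equality in the exponential form then forces equality in the linear inequality used there, for the Gaussianized comparison states. In the limit via the CLT, the Gaussian states $\gamma_A,\gamma_B$ must attain exponential equality. By the observation of De Palma \emph{et al.}, for Gaussian inputs this requires equal entropies, and more precisely equal covariance matrices. Since entropy convergence gives $S(\gamma_A)-S(A)$ controlled by the same monotone gap, this yields $S(A)=S(B)$. Theorem~\ref{thm:main} then finishes as described above.
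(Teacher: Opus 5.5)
Your ``if'' direction and your reduction (exponential equality with $S(A)=S(B)$ gives linear equality, then Theorem~\ref{thm:main}) are correct. Your entropy-convergence argument is also valid: lower semicontinuity plus the Gaussian maximum-entropy bound at fixed covariance can replace Winter's continuity bound, provided you center the inputs first, since $\rho^{\boxplus n}$ has mean $\sqrt n\,d$. The gap is the step you flag yourself: nothing in the proposal transports equality from $(A,B)$ to the iterates $(A^{(n)},B^{(n)})$. The exponential deficit has no subadditivity under the tree that you can invoke. The fallback is not an argument either. Equality cases do not survive the limiting heat-flow proof of De Palma--Mari--Giovannetti unless you control equality in the underlying Stam-type Fisher-information inequality along the whole flow, and you do not. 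Moreover, even granting exponential equality for the Gaussian limits, you would learn only $V_A=V_B$. Equal covariances do not give $S(A)=S(B)$ for non-Gaussian inputs, so your final sentence does not follow.

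The missing idea is to freeze the affine weight. Set $p_*=\eta e^{S(A)/m}/\mathcal Z_\eta$; the identity $F_p=\delta+m\,d(p\Vert p_*)$ turns exponential equality into $F_{p_*}(A,B)=0$. The doubling identity of Lemma~\ref{lem:doubling} writes $F_p(A_1,B_1)+F_p(A_2,B_2)$ as $F_p(A_+,B_+)+F_p(A_-,B_-\mid A_+B_+)+I(C_-;E_+\mid C_+)$. The middle term is nonnegative by the De Palma--Trevisan conditional qEPI at an arbitrary weight $p$; its conditional-independence hypothesis holds because the $A$-group and the $B$-group stay independent after the copy step. The last term is nonnegative by strong subadditivity. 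Apply this to two copies of the $k$-th iterate, whose plus pair is the $(k+1)$-th iterate. It gives $0\le F_{p_*}(A^{(k+1)},B^{(k+1)})\le 2F_{p_*}(A^{(k)},B^{(k)})$, so zero deficit persists at the same $p_*$ for every $k$. Passing to the limit yields $F_{p_*}(\gamma_{V_A},\gamma_{V_B})=0$. Strict Gaussian concavity (Lemma~\ref{lem:Gaussian}) then forces both $V_A=V_B$ and $p_*=\eta$, and the latter is exactly $S(A)=S(B)$. Keeping $p_*$ fixed is what lets the Gaussian limit constrain the original entropies; propagating only $\delta=0$ with a level-dependent weight would not.
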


Thus the two qEPIs have the same equality class. In particular, when
both input covariance matrices are fixed to the same matrix, the
centered Gaussian pair uniquely minimizes the linear deficit
(Corollary~\ref{cor:fixedcov}). For unequal input entropies the qEPI
bound is strict at every finite-energy input pair.

This equality question is distinct from the entropy photon-number
inequality (EPnI)~\cite{GES}, which asks whether product thermal inputs minimize
output entropy for prescribed input entropies. The EPnI is a stronger
conjectured bound; see~\cite[Secs.~V.2 and~VII.1]{GaussianReview}.
Identical thermal inputs attain the qEPI, whereas thermal inputs
with unequal entropies give a strict inequality.

\subsection{A first calculation: the two output deficits}

The following identity makes the missing implication precise. Define
\begin{align}
 L_C&=S(C)-\eta S(A)-(1-\eta)S(B),\notag\\
 L_E&=S(E)-(1-\eta)S(A)-\eta S(B).
 \label{eq:twoDeficits}
\end{align}
Both quantities are nonnegative. For $E$, apply the linear qEPI after
the parity transformation $R_A\mapsto-R_A$, which preserves entropy.
Since the input is a product and the beam splitter is unitary,
\begin{align}
 I(C;E)_\omega
 &=S(C)+S(E)-S(CE)\notag\\
 &=S(C)+S(E)-S(A)-S(B)\notag\\
 &=L_C+L_E. \label{eq:outputdeficit}
\end{align}
Consequently, product outputs force both deficits to vanish.
Theorem~\ref{thm:main} proves the converse: $L_C=0$ already forces
$L_E=0$. Exchanging the roles of the two outputs (transmissivity
$1-\eta$ together with the parity map on $A$), the same theorem shows
that $L_E=0$ forces $L_C=0$.

For pure inputs this implication is immediate. The equality $L_C=0$
becomes $S(C)=0$, so the $C$ marginal is pure and the joint output
is a product. The difficulty is to obtain the same conclusion when
both input states may be mixed.

\subsection{Related work and the proof idea}

For identical copies at $\eta=1/2$, Bu and
Li~\cite[Theorems~5 and~7]{BuLi} characterize Gaussianity by product
outputs. In the finite-energy class, their criterion is equivalently
$S(C)+S(E)=2S(\rho)$. They also give a pure-state
criterion~\cite[Theorem~1]{BuLi}.
Hahn and Takagi~\cite[Theorem~1]{HT} construct a faithful
non-Gaussianity measure from the correlations generated by
identical-copy mixing. Our starting condition involves a single output,
and permits distinct mixed inputs and any nontrivial transmissivity.
For discrete-variable systems, Bu, Gu, and Jaffe~\cite{BGJ}
characterized the minimum-output-entropy inputs of a qudit convolution
as pure stabilizer states; the present paper concerns the bosonic
equality problem.

The proof uses a doubling construction, an approach familiar from
classical Gaussian extremality~\cite{GengNair}. Classical arguments
combine the entropy chain rule with conditioning on random variables.
In the quantum circuit, the corresponding entropy decomposition gives
a vanishing conditional mutual information. This alone does not imply
unconditional independence. Furthermore, a quantum memory cannot
generally be replaced by a family of classical conditional input
distributions without changing the state.

Our choice of auxiliary inputs resolves this particular obstruction.
We adjoin two identical thermal states and evaluate the doubled circuit
in two orders; see Fig.~\ref{fig:circuit}. One order gives the entropy
decomposition. In the other, the thermal pair supplies independent noise
on the two original outputs. The zero conditional mutual information
then states that one local noisy channel loses no mutual information.

The channel can prepare a fixed vacuum output with an
input-dependent probability. We expose that event with a two-outcome
quantum instrument. In the vacuum branch, every correlation with a
reference is discarded. If the channel loses no mutual information,
the filtered input in that branch must already be a product.
The filter is injective, so the original input must be a product as
well. An injectivity argument on the other local channel then yields
independence of the original beam-splitter outputs.

The inequality used in the doubling step is the conditional qEPI.
K\"onig~\cite{KoenigConditional} proved a conditional linear inequality
at weight $p=\eta$ for two independent Gaussian input--memory pairs,
and conjectured a general conditional inequality.
De Palma and Trevisan~\cite{DPT} established the conditional qEPI
under conditional independence. Their Gaussian construction
asymptotically attains the bound for prescribed limiting conditional
entropies~\cite[Theorem~7]{DPT}. Our unconditional equality
classification is obtained from the noisy-channel argument.

For exponential equality, we keep one affine weight fixed under repeated
convolution. The doubling identity preserves zero deficit at each
iteration. The quantum central limit theorem~\cite{CH,BDLR} and
entropy continuity~\cite{Winter} then yield Gaussian equality.
Strictness within the Gaussian family reduces the question to
Theorem~\ref{thm:main}.

In Section~\ref{sec:definitions} we state our conventions and the
conditional qEPI. In Section~\ref{sec:linearproof} we develop the two tools
and prove linear equality. In Section~\ref{sec:exp} we prove exponential
equality. In Section~\ref{sec:discussion} we discuss further questions.
The appendices contain the auxiliary operator calculations and a
self-contained finite-energy Darmois--Skitovich proof.

\begin{figure}[tbp]
\centering
\begin{minipage}[t]{.495\textwidth}
\centering
\textbf{(a) Balanced beam splitters first}\par\smallskip
\begin{tikzpicture}[x=1.18cm,y=.78cm,
 every node/.style={font=\footnotesize},
 gate/.style={draw,fill=white,minimum width=.75cm,
  minimum height=1.08cm,inner sep=1pt},
 wire/.style={line width=.55pt}]
 \path[use as bounding box] (-.95,-4) rectangle (5.5,.85);
 \node[anchor=east] at (0,0) {$A_1$};
 \node[anchor=east] at (0,-1) {$A_2$};
 \node[anchor=east] at (0,-2) {$B_1$};
 \node[anchor=east] at (0,-3) {$B_2$};
 \node[gate] (a) at (.55,-.5) {$U_{1/2}$};
 \node[gate] (b) at (.55,-2.5) {$U_{1/2}$};
 \draw[wire] (.04,0)--(.23,0) (.04,-1)--(.23,-1)
  (.04,-2)--(.23,-2) (.04,-3)--(.23,-3);
 \draw[wire] (.87,0)--(2.83,0);
 \draw[wire] (.87,-3)--(2.83,-3);
 \draw[wire] (.87,-1)--(1.70,-1)--(2.25,-2)--(2.83,-2);
 \draw[white,line width=2.6pt] (1.70,-2)--(2.25,-1);
 \draw[wire] (.87,-2)--(1.70,-2)--(2.25,-1)--(2.83,-1);
 \node[anchor=south] at (1.28,0) {$A_+$};
 \node[anchor=south] at (1.28,-1) {$A_-$};
 \node[anchor=north] at (1.28,-2) {$B_+$};
 \node[anchor=north] at (1.28,-3) {$B_-$};
 \draw[densely dashed,gray] (2.38,.28) rectangle (2.68,-1.28);
 \node[anchor=south] at (2.53,.35) {$M$};
 \node[gate] (p) at (3.15,-.5) {$U_\eta$};
 \node[gate] (q) at (3.15,-2.5) {$U_\eta$};
 \draw[wire] (3.47,0)--(4.75,0) node[right] {$C_+$};
 \draw[wire] (3.47,-3)--(4.75,-3) node[right] {$E_-$};
 \draw[wire] (3.47,-1)--(3.67,-1)--(4.22,-2)--(4.75,-2)
  node[right] {$E_+$};
 \draw[white,line width=2.6pt] (3.67,-2)--(4.22,-1);
 \draw[wire] (3.47,-2)--(3.67,-2)--(4.22,-1)--(4.75,-1)
  node[right] {$C_-$};
 \node[anchor=north,font=\scriptsize] at (2.38,-3.6)
  {$M=A_+B_+\ \xrightarrow{\ U_\eta\ }\ C_+E_+$};
\end{tikzpicture}
\end{minipage}\hfill
\begin{minipage}[t]{.495\textwidth}
\centering
\textbf{(b) $\eta$-beam splitters first}\par\smallskip
\begin{tikzpicture}[x=1.18cm,y=.78cm,
 every node/.style={font=\footnotesize},
 gate/.style={draw,fill=white,minimum width=.75cm,
  minimum height=1.08cm,inner sep=1pt},
 wire/.style={line width=.55pt}]
 \path[use as bounding box] (-.95,-4) rectangle (5.5,.85);
 \node[anchor=east] at (0,0) {$A$};
 \node[anchor=east] at (0,-1) {$B$};
 \node[anchor=east] at (0,-2) {$(\tau_2)_{A_2}$};
 \node[anchor=east] at (0,-3) {$(\tau_2)_{B_2}$};
 \node[gate] (a) at (.55,-.5) {$U_\eta$};
 \node[gate] (b) at (.55,-2.5) {$U_\eta$};
 \draw[wire] (.04,0)--(.23,0) (.04,-1)--(.23,-1)
  (.04,-2)--(.23,-2) (.04,-3)--(.23,-3);
 \draw[wire] (.87,0)--(2.83,0);
 \draw[wire] (.87,-3)--(2.83,-3);
 \draw[wire] (.87,-1)--(1.70,-1)--(2.25,-2)--(2.83,-2);
 \draw[white,line width=2.6pt] (1.70,-2)--(2.25,-1);
 \draw[wire] (.87,-2)--(1.70,-2)--(2.25,-1)--(2.83,-1);
 \node[anchor=south] at (1.28,0) {$C$};
 \node[anchor=south] at (1.28,-1) {$E$};
 \node[anchor=north] at (1.28,-2) {$G_C$};
 \node[anchor=north] at (1.28,-3) {$G_E$};
 \draw[densely dashed,gray] (2.55,.48) rectangle (4.45,-1.48);
 \draw[densely dashed,gray] (2.55,-1.58) rectangle (4.45,-3.48);
 \node[gate] (p) at (3.15,-.5) {$U_{1/2}$};
 \node[gate] (q) at (3.15,-2.5) {$U_{1/2}$};
 \draw[wire] (3.47,0)--(4.55,0) node[right] {$C_+$};
 \draw[wire] (3.47,-1)--(4.55,-1) node[right] {$C_-$};
 \draw[wire] (3.47,-2)--(4.55,-2) node[right] {$E_+=J$};
 \draw[wire] (3.47,-3)--(4.55,-3) node[right] {$E_-$};
 \node[anchor=south,font=\scriptsize,fill=white,inner sep=1pt]
  at (3.5,.39) {dilation of $\Phi$};
 \node[anchor=north,font=\scriptsize] at (2.38,-3.6)
  {$G_C,G_E:\ \text{independent thermal states }\tau_2$};
\end{tikzpicture}
\end{minipage}
\par\vspace{.8ex}
\caption{Two orders of the doubled circuit. Each wire is an $m$-mode
system; crossings only route systems. In (a), the upper
$\eta$-beam splitter gives the plus-pair deficit $F_p(A_+,B_+)$, the
lower one gives $F_p(A_-,B_-\mid M)$, and the remaining term in
Lemma~\ref{lem:doubling} is $I(C_-;E_+\mid C_+)$.
In (b), the auxiliary inputs are identical thermal states $\tau_2$ with
mean photon number $2$ per mode. Their $\eta$-beam splitter leaves
them independent and unchanged. Each dashed enclosure is a dilation of
$\Phi(\rho)=\B_{1/2}(\rho\otimes\tau_2)$: retaining its upper output and
discarding its lower output gives $C\mapsto C_+$ or $E\mapsto E_+$.
Both outputs of each dilation remain displayed because the proof also
uses their joint state. The two circuit orders produce the same joint
state on $C_+C_-E_+E_-$.}
\label{fig:circuit}
\end{figure}

\section{Conventions and the conditional qEPI}\label{sec:definitions}

\subsection{Bosonic systems and entropy}

We use the standard bosonic framework~\cite{HW,Weedbrook}. An $m$-mode system has Hilbert space
$\mathcal H_m=\ell^2(\N_0^m)$, number basis $\ket{\bm n}$, annihilation operators $a_j$, and quadratures
\[
 Q_j=\frac{a_j+a_j^\dagger}{\sqrt2},\qquad
 P_j=\frac{a_j-a_j^\dagger}{i\sqrt2}.
\]
Write
\[
 R=(Q_1,P_1,\ldots,Q_m,P_m)^T,\qquad
 \Omega=\bigoplus_{j=1}^m\begin{pmatrix}0&1\\-1&0\end{pmatrix}.
\]
Then $[R_j,R_k]=i\Omega_{jk}I$, where $\Omega$ is the symplectic
form, with $\hbar=1$ and vacuum covariance $I/2$.

A state is a positive trace-class operator of trace one. We write $\rho_X=\Tr_Y\rho_{XY}$ for a marginal and $\norm T_1=\Tr\sqrt{T^\dagger T}$ for trace norm. A channel is a completely positive trace-preserving map on trace-class operators; $\id$ denotes the identity channel. The total photon-number operator is
\begin{equation}\label{eq:Hidentity}
 H_m=\sum_{j=1}^m a_j^\dagger a_j
 =\frac12 R^TR-\frac m2 I.
\end{equation}
For a finite-energy state $\rho$, its mean vector $d_\rho$ and covariance matrix $V_\rho$
are
\begin{align}
 d_\rho&=\Tr\rho R,\notag\\
 (V_\rho)_{jk}
 &=\frac12\Tr\rho\{R_j-(d_\rho)_jI,R_k-(d_\rho)_kI\},
 \label{eq:moments}
\end{align}
where $(d_\rho)_k$ is the $k$-th element of the vector $d_\rho$ and $\{X,Y\}=XY+YX$ is the anticommutator of operators $X$ and $Y$. Unbounded products are
interpreted as quadratic-form expectations; details are in
Appendix~\ref{sec:analytic}. A state is centered if $d_\rho=0$, and
\begin{equation}\label{eq:energycov}
 \Tr\rho H_m=\frac{\operatorname{tr}V_\rho+|d_\rho|^2-m}{2}.
\end{equation}
By~\eqref{eq:energycov}, finite energy is the same as finite second
moments of the quadratures, the hypothesis used in the quantum central
limit theorem~\cite{BDLR,BGM}.

All logarithms are natural. The von Neumann entropy of a state $\rho$ is $S(\rho)=-\Tr\rho\log\rho$,
with the convention that $0\log0=0$. Define
\[
 g(x)=(x+1)\log(x+1)-x\log x,\qquad x\geq0.
\]
The oscillator Gibbs bound~\cite{Winter,HW} is
\begin{equation}\label{eq:energyentropy}
 \Tr\rho H_m\leq\mathsf E
 \quad\Longrightarrow\quad
 S(\rho)\leq m g(\mathsf E/m)<\infty.
\end{equation}
Each marginal of a finite-mode, finite-energy state has finite entropy.
All beam splitters below preserve total photon number, so this bound
applies throughout the circuits.

For finite entropies, our conventions for mutual information, conditional
entropy, and conditional mutual information (CMI) are
\begin{align}
 I(X;Y)&=S(X)+S(Y)-S(XY),\notag\\
 S(X\mid M)&=S(XM)-S(M),\label{eq:conditionalS}\\
 I(X;Y\mid M)&=S(X\mid M)-S(X\mid YM).
 \label{eq:CMIdef}
\end{align}
In general $I(X;Y)=D(\rho_{XY}\Vert\rho_X\otimes\rho_Y)$, where
$D(\rho\Vert\sigma)=\Tr\rho(\log\rho-\log\sigma)$ is the quantum
relative entropy~\cite{Lindblad}, which is nonnegative and vanishes if
and only if $\rho=\sigma$; in particular $I(X;Y)=0$ exactly for product
states.
We use data processing~\cite{Lindblad,MR}, strong
subadditivity~\cite{LR}, and the chain rule
$I(XY;K)=I(X;K)+I(Y;K\mid X)$.

\subsection{Characteristic functions and Gaussian states}

The real Weyl characteristic function is
\[
 W(\xi)=e^{i\xi^TR},\qquad
 \chi_\rho(\xi)=\Tr\rho W(\xi),\qquad \xi\in\R^{2m}.
\]
It determines the state. Gaussian states are precisely those for which
\begin{equation}\label{eq:GaussianDef}
 \chi_\rho(\xi)=\exp(i d^T\xi-\tfrac12\xi^TV\xi).
\end{equation}
A real symmetric matrix $V$ is the covariance matrix of some quantum
state if and only if $V+i\Omega/2\geq0$~\cite{Weedbrook}. Write $\gamma_V$ for the unique centered Gaussian
state with covariance $V$. Its Williamson form and entropy are
\begin{align}
 V&=\mathsf S\operatorname{diag}(\nu_1,\nu_1,\ldots,\nu_m,\nu_m)
       \mathsf S^T,\notag\\[-2pt]
 &\hspace{15mm}\mathsf S\Omega\mathsf S^T=\Omega,\quad \nu_j\geq\tfrac12,
 \label{eq:Williamson}\\
 S(\gamma_V)&=\sum_{j=1}^m g(\nu_j-\tfrac12).
 \label{eq:GaussianEntropy}
\end{align}
The matrix $\mathsf S$ is symplectic and $\nu_j$ are the symplectic
eigenvalues~\cite[Sec.~III.C, Eq.~(3.27)]{HW}.

For heterodyne calculations we also use the complex convention
\begin{align}
 \D(z)&=\exp\!\left(\sum_j z_ja_j^\dagger-\bar z_ja_j\right),
 &\widehat\rho(z)&=\Tr\rho\D(z),\label{eq:complexD}\\
 W(\xi)&=\D(z(\xi)),
 &z(\xi)_j&=\frac{-\xi_{2j}+i\xi_{2j-1}}{\sqrt2}.
 \label{eq:realcomplex}
\end{align}
Here $z\in\C^m$, and $\ket z=\D(z)\ket0$ is a coherent state.
Displacement changes the mean and preserves covariance and entropy.
After matching quadrature ordering, De Palma--Trevisan's real argument
is related to ours by $\xi=\Omega x$; Cuesta's is related by
$\xi=-\Omega x$, and his covariance is $\Gamma=2V$.\footnote{Theorem
and equation numbers in citations refer to the arXiv versions of the
cited works; their identifiers are given in the bibliography.}

The retained beam-splitter channel is
\begin{equation}\label{eq:Bchannel}
 \B_\eta(T_{AB})=\Tr_E(U_\eta T_{AB}U_\eta^\dagger).
\end{equation}
For a product input, $\B_\eta(\rho_A\otimes\rho_B)$ is the quantum
convolution $\rho_A\boxplus_\eta\rho_B$ of~\cite{BDLR,BM}, so
condition~(i) of Theorem~\ref{thm:main} reads
$S(\rho_A\boxplus_\eta\rho_B)=\eta S(\rho_A)+(1-\eta)S(\rho_B)$.
For independent inputs, its joint-output characteristic function is
\begin{align}
 \chi_{CE}(u,v)
 ={}&\chi_A(\sqrt\eta\,u-\sqrt{1-\eta}\,v)\chi_B(\sqrt{1-\eta}\,u+\sqrt\eta\,v).
 \label{eq:BSchi}
\end{align}
The same mixing formula holds for complex arguments.
In particular $V_C=\eta V_A+(1-\eta)V_B$.

\subsection{The affine deficit}

The proof uses a variational weight $p$, distinct from the physical
transmissivity $\eta$. For $0\leq p\leq1$ and $0<q<1$, let
\begin{equation}\label{eq:binaryD}
 d(p\Vert q)=p\log\frac pq+(1-p)\log\frac{1-p}{1-q}.
\end{equation}
For independent inputs define
\begin{align}
 F_p(A,B)&=S(C)-pS(A)-(1-p)S(B)+m d(p\Vert\eta),\label{eq:F}\\
 L_\eta(A,B)&=F_\eta(A,B)=L_C. \label{eq:linear}
\end{align}
For a joint state $\rho_{ABM}$, with
$\rho_{CM}=(\B_\eta\otimes\id_M)(\rho_{ABM})$, define
\begin{align}
 F_p(A,B\mid M)
 ={}&S(C\mid M)-pS(A\mid M)-(1-p)S(B\mid M)+m d(p\Vert\eta).
 \label{eq:Fc}
\end{align}
Write $L_\eta(A,B\mid M)=F_\eta(A,B\mid M)$ for the conditional
linear deficit. The conditional qEPI~\cite[Theorem~6, Eq.~(106)]{DPT} states that
\begin{equation}\label{eq:importedconditional}
 I(A;B\mid M)=0\quad\Longrightarrow\quad F_p(A,B\mid M)\geq0
\end{equation}
for every $p\in[0,1]$, provided the bosonic inputs have finite energy
and $S(M)<\infty$. A one-dimensional memory gives the unconditional
inequality. Below the memory consists of finite-energy bosonic modes.

\begin{table}[t]
\caption{Notation used in the proof}
\label{tab:notation}
\centering
\begin{tabularx}{\columnwidth}{@{}lX@{}}
\toprule
Symbol & Meaning\\
\midrule
$\eta,\ p$ & Physical transmissivity and variational weight\\
$R,\Omega;\ d,V$ & Quadratures, symplectic form; mean and covariance\\
$H_m,\ \tau_N$ & Photon number and thermal state with $N$ photons per mode\\
$L_C,L_E;\ F_p$ & Two output deficits; affine deficit\\
$M=A_+B_+$ & Memory in the doubling identity\\
$\Xi$ & Final four-output state of the doubled circuit\\
$\Phi$ & Thermal attenuator of transmissivity $1/2$ with environment $\tau_2$\\
$G,\ F$ & Vacuum-branch effect and binary flag\\
$J=E_+,\ \sigma_{CJ}$ & Retained noisy reference and noisy joint state\\
$\rho^{\boxplus n}$ & $n$-fold quantum convolution of $\rho$\\
\bottomrule
\end{tabularx}
\end{table}

\section{Proof of linear equality}\label{sec:linearproof}

We first establish an entropy decomposition and a strict information-loss
property. Their combination will prove Theorem~\ref{thm:main}.

\subsection{Conditional doubling}\label{sec:doubling}

Start with four independent finite-energy $m$-mode input systems in the state
\begin{equation}
 \rho_{A_1}\otimes\rho_{B_1}\otimes\rho_{A_2}\otimes\rho_{B_2}.
\end{equation}
The two pairs may be different. Mix $A_1,A_2$ on a balanced beam splitter,
and independently mix $B_1,B_2$ in the same way:
\begin{equation}\label{eq:copy}
 R_{A_+}=\frac{R_{A_1}+R_{A_2}}{\sqrt2},\qquad
 R_{A_-}=\frac{-R_{A_1}+R_{A_2}}{\sqrt2},
\end{equation}
with analogous definitions for $B$. We call these two balanced
beam splitters the copy step. Then apply the $\eta$-beam splitter
$U_\eta$ to $A_+B_+$ and to $A_-B_-$, producing $C_+E_+$ and $C_-E_-$.
Set $M=A_+B_+$ before converting it unitarily to $C_+E_+$.
To specify the states underlying these labels, write $\zeta$ for the state
after the copy step:
\begin{align}\label{eq:zeta}
 \zeta_{A_+A_-B_+B_-}
 ={}&\bigl[U_{1/2}(\rho_{A_1}\otimes\rho_{A_2})U_{1/2}^\dagger\bigr]
\otimes\bigl[U_{1/2}(\rho_{B_1}\otimes\rho_{B_2})U_{1/2}^\dagger\bigr].
\end{align}
Tensor factors are reordered according to their labels. Applying
$\B_\eta$ to the minus pair and retaining $M$ gives a state on $C_-M$.
Applying the $\eta$-beam splitter to $M$ then gives the same state on
$C_-C_+E_+$.
The latter is a marginal of the four-output state, denoted
$\Xi_{C_+E_+C_-E_-}$. The labels $M$ and $C_+E_+$ describe the
memory before and after its $\eta$-beam splitter, respectively.

This construction is a quantum version of the doubling trick used in
classical Gaussian extremality arguments~\cite{GengNair}. Here the
conditional qEPI provides the nonnegative remainder. Mixing two input pairs separates
their total deficit into the deficit of the plus pair, the deficit of
the minus pair conditioned on the plus inputs, and a correlation term.
The following identity makes all three terms nonnegative. Allowing the two input pairs to differ will let us
choose a thermal auxiliary in Section~\ref{sec:mainproof}.

\begin{lemma}[Doubling with different pairs]\label{lem:doubling}
Let $\zeta_{A_-B_-M}$ be the state in~\eqref{eq:zeta}, with tensor factors
reordered and $M=A_+B_+$. Define
\[
 \zeta_{C_-M}=(\B_\eta\otimes\id_M)(\zeta_{A_-B_-M}).
\]
The conditional deficit below uses these two states; the CMI is evaluated
in the four-output state $\Xi$. For each fixed $p\in[0,1]$,
\begin{align}\label{eq:general-double}
 &F_p(A_1,B_1)+F_p(A_2,B_2)=F_p(A_+,B_+)+F_p(A_-,B_-\mid M)+I(C_-;E_+\mid C_+).
\end{align}
All three terms on the right are nonnegative.
\end{lemma}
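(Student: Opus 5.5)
The identity is pure entropy bookkeeping. We expand both sides in terms of entropies of marginals of $\Xi$ and of the input states, and use three facts: unitary invariance; the product structure $\zeta=\zeta_{A_+A_-}\otimes\zeta_{B_+B_-}$; and the fact that the balanced copy step preserves entropy of the joint pair, so $S(A_+A_-)=S(A_1)+S(A_2)$ and similarly for $B$. The $m\,d(p\Vert\eta)$ terms appear twice on each side and cancel, so only the entropic parts need matching.

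\textbf{Step 1: the input terms.} The $A$-contributions on the right are $-p[S(A_+)+S(A_-\mid M)]$. Since $M=A_+B_+$ and the $A$ and $B$ blocks are independent,
\[
S(A_-\mid M)=S(A_-A_+B_+)-S(A_+B_+)=S(A_-A_+)-S(A_+).
\]
Hence $S(A_+)+S(A_-\mid M)=S(A_+A_-)=S(A_1)+S(A_2)$. The same argument gives the $B$ terms with weight $1-p$, and these match the left side.

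\textbf{Step 2: the output terms.} On the left we have $S(C_1)+S(C_2)$, where $C_i$ is the output of the pair $A_iB_i$. Since the two $\eta$-splitters and the copy splitters commute (Fig.~\ref{fig:circuit}), the four-output state $\Xi$ can equally be produced by first applying $U_\eta$ to each pair and then mixing $C_1C_2$ and $E_1E_2$ by balanced splitters. Indeed, the beam-splitter linear maps act on different indices of a tensor, as seen from~\eqref{eq:BSchi}. It follows that $S(C_+C_-)=S(C_1C_2)=S(C_1)+S(C_2)$, because $C_1,C_2$ are independent. On the right, $S(C_-\mid M)$ is computed from $\zeta_{C_-M}$. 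Converting $M$ unitarily to $C_+E_+$ gives $S(C_-\mid M)=S(C_-C_+E_+)-S(C_+E_+)$. Adding $S(C_+)$ and the CMI
\[
I(C_-;E_+\mid C_+)=S(C_-C_+)+S(E_+C_+)-S(C_+)-S(C_-C_+E_+)
\]
leaves exactly $S(C_+C_-)$, which matches the left side. This requires the marginals of $\Xi$ on $C_+E_+$ and $C_-C_+E_+$ to coincide with those of the unitarily transformed $\zeta_{C_-M}$, which holds by construction.

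\textbf{Step 3: nonnegativity.} The CMI is nonnegative by strong subadditivity. $F_p(A_+,B_+)\ge0$ is the unconditional case of~\eqref{eq:importedconditional}, since $\zeta_{A_+B_+}=\zeta_{A_+}\otimes\zeta_{B_+}$. For $F_p(A_-,B_-\mid M)$ we invoke~\eqref{eq:importedconditional} and must check $I(A_-;B_-\mid A_+B_+)=0$. By independence of the blocks,
\[
I(A_-;B_-\mid A_+B_+)=S(A_-A_+)+S(B_-B_+)-S(A_+)-S(B_+)-S(A_-A_+)-S(B_-B_+)+S(A_+)+S(B_+)=0.
\]
Finite energy of all systems follows because the splitters conserve photon number, so $S(M)<\infty$ and all entropies above are finite by~\eqref{eq:energyentropy}. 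This finiteness also justifies every subtraction.

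The main obstacle is Step 2: justifying that the two circuit orders give the same state on $C_+C_-E_+E_-$, and that all entropy manipulations are legitimate in infinite dimensions. We handle the first by comparing characteristic functions via~\eqref{eq:BSchi}, since both orders induce the same symplectic orthogonal transformation of the quadratures. The second follows from finite energy.
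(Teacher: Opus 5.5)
Your proposal is correct and follows essentially the same route as the paper: you use the $A$/$B$-block factorization to get $S(A_+)+S(A_-\mid M)=S(A_1)+S(A_2)$ and $I(A_-;B_-\mid A_+B_+)=0$, the unitary identification $S(C_-\mid M)_\zeta=S(C_-\mid C_+E_+)_\Xi$ combined with the CMI to telescope to $S(C_+C_-)$, and the commutation of the copy step with the $\eta$-beam splitters to get $S(C_+C_-)=S(C_1)+S(C_2)$. Nonnegativity then comes, exactly as in the paper, from the unconditional and conditional affine qEPI and strong subadditivity, with finiteness of all entropies guaranteed by energy conservation.
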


\begin{proof}
The product in \eqref{eq:zeta} proves independence between the entire
$A$-group and the entire $B$-group. Indeed,
\begin{align}
 S(A_-\mid A_+B_+)=S(A_-A_+)+S(B_+)-S(A_+)-S(B_+)=S(A_-\mid A_+).
\end{align}
Similarly $S(B_-\mid A_+B_+)=S(B_-\mid B_+)$. Moreover
\begin{equation}
 S(A_-B_-\mid A_+B_+)
 =S(A_-\mid A_+)+S(B_-\mid B_+),
\end{equation}
which verifies conditional independence explicitly. Hence
\begin{align}\label{eq:admissible}
 I(A_-;B_-\mid A_+B_+)&=0,\notag\\
 S(A_-\mid M)&=S(A_-\mid A_+),\notag\\
 S(B_-\mid M)&=S(B_-\mid B_+).
\end{align}
The $A$ entropy terms in the right side of \eqref{eq:general-double} satisfy
\begin{align}
 S(A_+)+S(A_-\mid M)&=S(A_+A_-)\notag\\
 &=S(A_1A_2)=S(A_1)+S(A_2),
\end{align}
where the final equality uses independence of the original inputs.
Multiplication by $-p$ gives the desired $A$ contribution; the $B$ argument
is identical with coefficient $-(1-p)$.
The output terms combine as
\begin{align}
 &S(C_+)+S(C_-\mid M)+I(C_-;E_+\mid C_+)\notag\\
 &\qquad=S(C_+)+S(C_-\mid C_+)\notag\\
 &\qquad=S(C_+C_-).\label{eq:Cchain}
\end{align}
The equality of the two conditional entropies follows by applying the
$\eta$-beam splitter to the memory in both entropy terms:
\begin{align}
 S(C_-\mid M)_\zeta
 &=S(C_-M)_\zeta-S(M)_\zeta\notag\\
 &=S(C_-C_+E_+)_\Xi-S(C_+E_+)_\Xi\notag\\
 &=S(C_-\mid C_+E_+)_\Xi.
\end{align}

The copy step commutes with the $\eta$-beam splitters: the former
acts on the pair index and the latter on the $A/B$ index. To check this directly, put
$a=\sqrt\eta$, $b=\sqrt{1-\eta}$ and first define
$R_{C_i}=aR_{A_i}+bR_{B_i}$ and
$R_{E_i}=-bR_{A_i}+aR_{B_i}$ for $i=1,2$. Substitution shows
\begin{equation}
 R_{C_+}=\frac{R_{C_1}+R_{C_2}}{\sqrt2},\qquad
 R_{C_-}=\frac{-R_{C_1}+R_{C_2}}{\sqrt2},
\end{equation}
with the same identities for $E_+,E_-$. Both circuit orderings therefore
have the same action on all Weyl operators and the same joint output state.
If the $\eta$-beam splitters act first,
$C_1,C_2$ are independent and the copy step acts unitarily on them.
Therefore $S(C_+C_-)=S(C_1)+S(C_2)$. The two constant terms on the right
are $2m\,d(p\Vert\eta)$, proving the identity.

Since all initial energies are finite and passive beam splitters
preserve total energy, every joint entropy here is finite. The first term is nonnegative by unconditional affine qEPI, the second by
conditional affine qEPI and \eqref{eq:admissible}, and the third by strong
subadditivity. \end{proof}

The consequence needed for linear equality is obtained by setting $p=\eta$:
\begin{equation}\label{eq:zeroJ}
 \begin{aligned}
 L_\eta(A_1,B_1)=L_\eta(A_2,B_2)=0\\
 {}\Longrightarrow I(C_-;E_+\mid C_+)=0.
 \end{aligned}
\end{equation}

\subsection{A thermal channel with a common vacuum component}\label{sec:strict}

For $N>0$, define the $m$-mode product thermal state with mean photon number
$N$ per mode by
\begin{equation}\label{eq:tau}
 \tau_N=\bigotimes_{j=1}^m
 \left[\frac1{N+1}\sum_{n=0}^\infty
 \left(\frac{N}{N+1}\right)^n\ket n\!\bra n\right].
\end{equation}
Set $\tau_0=\vac$ when $N=0$. Its covariance is $(N+\frac12)I_{2m}$.
Fix once and for all the channel
\begin{equation}\label{eq:Phi}
 \Phi(\rho)=\B_{1/2}(\rho\otimes\tau_2).
\end{equation}
This is the thermal attenuator of transmissivity $1/2$ with
environment $\tau_2$. The environment energy $N=2$ makes it a
measure-and-prepare channel: a heterodyne measurement followed by
preparation of a displaced $\tau_{1/2}$ state. Such a
channel is entanglement breaking: when applied to one part of a joint
state, its output is a mixture of product states. We use the more specific
fact that every prepared state contains a component along the vacuum.
Any bath mean photon number $N>1$ gives a similar representation,
with prepared thermal mean $(N-1)/2>0$. The choice $N=2$ keeps the
constants simple. The transmissivity of $\Phi$ is $1/2$, independent
of the physical $\eta$.

\subsubsection{Measure-and-prepare representation}

For $z\in\C^m$, the \emph{coherent state} is
\begin{equation}\label{eq:coherent}
 \ket z=\D(z)\ket0
 =e^{-|z|^2/2}\sum_{\bm n\in\N_0^m}
 \frac{z_1^{n_1}\cdots z_m^{n_m}}{\sqrt{n_1!\cdots n_m!}}\ket{\bm n}.
\end{equation}
It satisfies
$\langle u|v\rangle=\exp[-(|u|^2+|v|^2)/2+\bar u\cdot v]$.
The \emph{heterodyne measurement} is the positive-operator-valued measure
(POVM)
\begin{equation}\label{eq:hetPOVM}
 M_{\rm het}(\dd z)=\pi^{-m}\ket z\bra z\,\dd^{2m}z,\qquad
 \int_{\C^m}M_{\rm het}(\dd z)=I.
\end{equation}
A POVM assigns positive operators to measurable outcome sets and sums to
the identity; $\dd^{2m}z$ denotes Lebesgue measure in the real and imaginary
coordinates of $\C^m$. The second identity in \eqref{eq:hetPOVM} follows
by inserting \eqref{eq:coherent} and integrating its number-basis entries.
For a state $\rho$, the outcome density is its Husimi
$Q$-function~\cite{HolevoBook},
\begin{equation}
 Q_\rho(z)=\pi^{-m}\bra z\rho\ket z,
 \qquad \int_{\C^m}Q_\rho(z)\,\dd^{2m}z=1.
\end{equation}

Holevo~\cite[Theorem~2, Eqs.~(30)--(31)]{HolevoEB} gives the
general measure-and-prepare form of entanglement-breaking Gaussian
channels. The representation below is the special case for $\Phi$; the
direct verification fixes the environment photon number and the
prepared-state noise in our convention.

\begin{lemma}\label{lem:heterodyne}
For every input state,
\begin{align}\label{eq:MP}
 \Phi(\rho)&=\int_{\C^m}Q_\rho(z)\,\pi_z\,\dd^{2m}z,\notag\\
 \pi_z&=\D(z/\sqrt2)\tau_{1/2}\D(z/\sqrt2)^\dagger.
\end{align}
\end{lemma}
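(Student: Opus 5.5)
Both sides of \eqref{eq:MP} are bounded (indeed trace-class) linear maps of $\rho$, so it suffices to show that they have the same characteristic function. By linearity and continuity we may first restrict to a dense set of inputs, for example coherent states $\rho=\ket w\bra w$, whose convex hull is weak-* dense. More directly, we can compare characteristic functions for an arbitrary $\rho$. We work with the complex convention $\widehat\rho(\zeta)=\Tr\rho\D(\zeta)$ of \eqref{eq:complexD}.

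\emph{Left side.} The complex version of \eqref{eq:BSchi} with $\eta=1/2$, after tracing out $E$ (setting $v=0$), gives
\[
\widehat{\Phi(\rho)}(\zeta)=\widehat\rho(\zeta/\sqrt2)\,\widehat{\tau_2}(\zeta/\sqrt2).
\]
The thermal state has covariance $(N+\tfrac12)I$. Under \eqref{eq:realcomplex} we have $|\xi|^2=2|\zeta|^2$, so
\[
\widehat{\tau_N}(\zeta)=\exp\!\bigl[-(N+\tfrac12)|\zeta|^2\bigr].
\]
With $N=2$ this yields
\[
\widehat{\Phi(\rho)}(\zeta)=\widehat\rho(\zeta/\sqrt2)\,e^{-\frac54|\zeta|^2}.
\]

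\emph{Right side.} A displacement multiplies the characteristic function by a phase:
\[
\Tr \D(\alpha)\sigma\D(\alpha)^\dagger\D(\zeta)=e^{\bar\zeta\cdot\alpha-\zeta\cdot\bar\alpha}\,\widehat\sigma(\zeta),
\]
which follows from the Weyl relation $\D(\alpha)^\dagger\D(\zeta)\D(\alpha)=e^{\zeta\cdot\bar\alpha-\bar\zeta\cdot\alpha}\D(\zeta)$ up to sign convention, to be fixed carefully. Hence
\[
\widehat{\pi_z}(\zeta)=e^{(\bar\zeta\cdot z-\zeta\cdot\bar z)/\sqrt2}\,e^{-|\zeta|^2}.
\]
Integrating against $Q_\rho$ requires the Fourier transform of the Husimi function. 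The standard identity is
\[
\int Q_\rho(z)\,e^{\bar\beta\cdot z-\beta\cdot\bar z}\,\dd^{2m}z=\widehat\rho(\beta)\,e^{-|\beta|^2/2},
\]
that is, the $Q$-function is the antinormally ordered symbol. I would verify this directly for $\rho=\ket w\bra w$. There $Q_\rho(z)=\pi^{-m}e^{-|z-w|^2}$ is a Gaussian, whose transform is $e^{\bar\beta w-\beta\bar w}e^{-|\beta|^2/4}\cdot$(check), while $\widehat{\ket w\bra w}(\beta)=e^{\bar\beta w-\beta\bar w}e^{-|\beta|^2/2}$. Exact constants will be fixed in the calculation. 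The identity then extends to all $\rho$, because both sides are linear and weak-* continuous in $\rho$ and operators of the form $\ket w\bra{w'}$ span a dense set; alternatively, it follows from the POVM identity \eqref{eq:hetPOVM} and the relation $\bra z\D(\beta)\ket z=e^{\bar\beta z-\beta\bar z}e^{-|\beta|^2/2}$. Taking $\beta=\zeta/\sqrt2$, the right side of \eqref{eq:MP} has characteristic function
\[
\widehat\rho(\zeta/\sqrt2)\,e^{-|\zeta|^2/4}\,e^{-|\zeta|^2}=\widehat\rho(\zeta/\sqrt2)\,e^{-\frac54|\zeta|^2},
\]
which matches the left side. This matching is exactly where $N=2$ and the prepared noise $\tau_{1/2}$ fit together: in general $\tfrac12(N+\tfrac12)=\tfrac14+(N'+\tfrac12)$ gives $N'=(N-1)/2$.

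\emph{Well-definedness.} Before comparing characteristic functions we must check that the integral in \eqref{eq:MP} makes sense. Since $Q_\rho\ge0$ has total mass $1$ and $\norm{\pi_z}_1=1$, the integral converges as a Bochner integral in trace class and defines a state. Its characteristic function may therefore be computed under the integral sign, because $\D(\zeta)$ is bounded. Equal characteristic functions then give equal states.

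\emph{Main obstacle.} The only real difficulty is bookkeeping: getting the sign and phase conventions in the displacement/Weyl relation and in \eqref{eq:realcomplex} consistent, and confirming the Gaussian factor $e^{-|\beta|^2/2}$ in the $Q$-function identity. I would settle these once on coherent states, where every quantity is explicit.
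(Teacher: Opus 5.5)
This is the paper's proof. Both arguments compare $\widehat{\Phi(\rho)}(\zeta)=\widehat\rho(\zeta/\sqrt2)e^{-5|\zeta|^2/4}$, obtained from the beam-splitter formula, with the $Q_\rho$-average of the displaced-$\tau_{1/2}$ characteristic function computed via the heterodyne Fourier identity, and both treat the integral as a trace-norm Bochner integral. The paper verifies that identity through off-diagonal coherent matrix elements and normal ordering, which is your coherent-state check carried out for $\ket v\bra u$.

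The bookkeeping needs repair in three places:
\begin{enumerate}[label=(\alph*)]
\item Your own Weyl relation gives the phase $e^{\zeta\cdot\bar\alpha-\bar\zeta\cdot\alpha}$, and the correct Fourier kernel is $e^{\beta\cdot\bar z-\bar\beta\cdot z}$. Your two conjugated signs happen to cancel, so the final match survives.
\item The coherent-state transform carries $e^{-|\beta|^2}$, not $e^{-|\beta|^2/4}$.
\item The convex hull of coherent states is not dense in the states: for one mode, $\bra1\rho\ket1\le e^{-1}$ on it. Rely on linear spans, as you do later.
\end{enumerate}
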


\begin{proof}
Use the complex displacement characteristic function
$\widehat\rho(w)=\Tr\rho\D(w)$. The heterodyne Fourier identity is
\begin{equation}\label{eq:Qfourier}
 \int Q_\rho(z)e^{\sum_j(w_j\bar z_j-\bar w_jz_j)}\,\dd^{2m}z
 =\widehat\rho(w)e^{-|w|^2/2}.
\end{equation}
The thermal characteristic function is
\[
 \widehat{\tau_N}(w)=e^{-(N+1/2)|w|^2},
\]
as follows from its Gaussian covariance $(N+1/2)I$ and
\eqref{eq:realcomplex}. For a displaced state,
\[
 \Tr[\D(\alpha)\rho\D(\alpha)^\dagger\D(w)]
 =e^{w\cdot\bar\alpha-\bar w\cdot\alpha}\widehat\rho(w).
\]
Let $\Phi_{\rm prep}$ denote the integral map on the right of~\eqref{eq:MP}.
The three factors in its characteristic function are
\begin{align}
 \widehat{\Phi_{\rm prep}(\rho)}(w)
 &=e^{-|w|^2}\int Q_\rho(z)
 e^{(w\cdot\bar z-\bar w\cdot z)/\sqrt2}\,\dd^{2m}z\notag\\
 &\overset{\mathrm{(a)}}= e^{-|w|^2}\widehat\rho(w/\sqrt2)e^{-|w|^2/4}\notag\\
 &=\widehat\rho(w/\sqrt2)e^{-5|w|^2/4}.
\end{align}
Step (a) uses the Fourier identity~\eqref{eq:Qfourier} at $w/\sqrt2$;
Appendix~\ref{sec:FourierProof} gives its coherent-state verification.
On the other hand, the beam-splitter formula gives
\begin{equation}\label{eq:PhiCharacteristic}
 \widehat{\Phi(\rho)}(w)
 =\widehat\rho(w/\sqrt2)\widehat{\tau_2}(w/\sqrt2)
 =\widehat\rho(w/\sqrt2)e^{-5|w|^2/4}.
\end{equation}
Uniqueness of characteristic functions proves~\eqref{eq:MP}.
The integral is in trace norm; its justification is in
Appendix~\ref{sec:analytic}.
\end{proof}

\subsubsection{The vacuum component}

\begin{lemma}\label{lem:vacuum}
Each state in \eqref{eq:MP} satisfies
\begin{equation}\label{eq:domination}
 \pi_z\geq c(z)\vac,\qquad
 c(z)=\left(\frac23\right)^m e^{-|z|^2}\in(0,1).
\end{equation}
Thus $\pi_z=c(z)\vac+(1-c(z))\kappa_z$ for a measurable family of states
$\kappa_z$.
\end{lemma}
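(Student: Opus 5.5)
The plan is to prove the operator inequality with the rank-one Cauchy--Schwarz bound
\[
 A\geq \frac{\ket v\!\bra v}{\bra v A^{-1}\ket v},
\]
applied to $A=\pi_z$ and $v=\ket0$. The inverse $\pi_z^{-1}$ is explicit because $\pi_z$ is a displaced thermal state. Write $\beta=z/\sqrt2$, so $\pi_z=\D(\beta)\tau_{1/2}\D(\beta)^\dagger$. From \eqref{eq:tau} with $N=1/2$, each mode of $\tau_{1/2}$ is $\frac23\sum_n 3^{-n}\ket n\!\bra n$. Hence
\[
 \tau_{1/2}=(2/3)^m\,3^{-H_m},
 \qquad
 \tau_{1/2}^{-1/2}=(3/2)^{m/2}\,3^{H_m/2},
\]
where the inverse is a positive self-adjoint operator diagonal in the number basis. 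Consequently $\pi_z^{-1/2}=\D(\beta)\tau_{1/2}^{-1/2}\D(\beta)^\dagger$.

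The key computation is that $\D(\beta)^\dagger\ket0=\ket{-\beta}$ lies in the domain of $3^{H_m/2}$. By \eqref{eq:coherent},
\[
 \norm{3^{H_m/2}\ket{-\beta}}^2
 =e^{-|\beta|^2}\sum_{\bm n}\frac{3^{|\bm n|}|\beta_1|^{2n_1}\cdots|\beta_m|^{2n_m}}{n_1!\cdots n_m!}
 =e^{2|\beta|^2}=e^{|z|^2}.
\]
Therefore $\ket0$ is in the domain of $\pi_z^{-1/2}$, and
\[
 \norm{\pi_z^{-1/2}\ket0}^2=(3/2)^m e^{|z|^2}=c(z)^{-1}.
\]

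For an arbitrary vector $\psi$, the plan is to write $\langle\psi|0\rangle=\langle\pi_z^{1/2}\psi\,|\,\pi_z^{-1/2}0\rangle$. This is legitimate because $\pi_z^{1/2}$ is bounded and $\pi_z^{-1/2}$ is its self-adjoint inverse on the relevant domain; one can check it directly in the displaced number basis. Cauchy--Schwarz then gives
\[
 |\langle\psi|0\rangle|^2\leq\bra\psi\pi_z\ket\psi\,c(z)^{-1},
\]
which is exactly $\pi_z\geq c(z)\vac$. The bound $c(z)\in(0,1)$ is immediate from $(2/3)^m<1$ and $e^{-|z|^2}\in(0,1]$.

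For the decomposition, I will set $\kappa_z=(\pi_z-c(z)\vac)/(1-c(z))$. This operator is positive by the inequality just proved and has unit trace. Measurability of $z\mapsto\kappa_z$ follows from trace-norm continuity of $z\mapsto\pi_z$, which holds by strong continuity of $\D$ and the standard fact that conjugating a fixed trace-class operator by a strongly continuous unitary family is trace-norm continuous. It also uses continuity of $c$.

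The only delicate point is the unbounded inverse. I expect the domain check above to be the main obstacle. An alternative that avoids unbounded operators is to truncate: apply the finite-dimensional argument to $P_K\pi_zP_K$ on the span of displaced number states $\D(\beta)\ket{\bm n}$ with $|\bm n|\leq K$, and let $K\to\infty$. The resulting constants $\bigl(\sum_{|\bm n|\le K}\cdots\bigr)^{-1}$ decrease to $c(z)$.
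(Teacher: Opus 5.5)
Your proposal is correct and follows the paper's proof essentially step for step. The matching steps are: the explicit form $\tau_{1/2}=(2/3)^m3^{-H_m}$; the evaluation $\norm{\pi_z^{-1/2}\ket0}^2=(3/2)^m\bra{-\beta}3^{H_m}\ket{-\beta}=(3/2)^m e^{|z|^2}$; the Cauchy--Schwarz bound $|\langle\psi|0\rangle|^2\leq\norm{\pi_z^{-1/2}\ket0}^2\bra\psi\pi_z\ket\psi$; and $\kappa_z=(\pi_z-c(z)\vac)/(1-c(z))$, measurable by trace-norm continuity of $z\mapsto\pi_z$. Your explicit domain check and the optional truncation argument only spell out what the paper states briefly when it interprets $\pi_z^{-1}$ as the quadratic form $\norm{\pi_z^{-1/2}\ket0}^2$.
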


\begin{proof}
The thermal formula gives $\tau_{1/2}=(2/3)^m3^{-H_m}$. With
$\alpha=z/\sqrt2$,
\begin{align}
 \bra0\pi_z^{-1}\ket0
 &=\left(\frac32\right)^m\bra{-\alpha}3^{H_m}\ket{-\alpha}\notag\\
 &=\left(\frac32\right)^m e^{2|\alpha|^2}
 =\left(\frac32\right)^m e^{|z|^2}.\label{eq:inversevac}
\end{align}
For one mode the required series is
\begin{equation}
 \langle\alpha|3^{a^\dagger a}|\alpha\rangle
 =e^{-|\alpha|^2}\sum_{n=0}^\infty\frac{3^n|\alpha|^{2n}}{n!}
 =e^{2|\alpha|^2};
\end{equation}
multiplication over the modes proves the middle equality of
\eqref{eq:inversevac}. Each prepared thermal state has zero kernel. The inverse in that equation is interpreted as its
quadratic form, namely $\norm{\pi_z^{-1/2}\ket0}^2$.
In particular $\pi_z^{-1/2}\ket0$ is a well-defined Hilbert
space vector. For every vector $v$, Cauchy--Schwarz gives
\begin{equation}
 |\bra0 v\rangle|^2
 =|\langle\pi_z^{-1/2}0,\pi_z^{1/2}v\rangle|^2
 \leq\bra0\pi_z^{-1}\ket0\,\langle v,\pi_zv\rangle.
\end{equation}
This is the operator inequality \eqref{eq:domination} with the reciprocal
of \eqref{eq:inversevac}. Define
$\kappa_z=(\pi_z-c(z)\vac)/(1-c(z))$. It is positive and has trace
one. The map $z\mapsto\pi_z$ is trace-norm continuous, as follows from
strong continuity of displacements acting on a trace-class operator.
The function $c(z)$ is continuous and $1-c(z)\geq1-(2/3)^m>0$.
Thus $z\mapsto\kappa_z$ is trace-norm continuous, in particular measurable.
\end{proof}

\subsection{A two-outcome instrument and strict information loss}
\label{sec:instrument}

The vacuum component in Lemma~\ref{lem:vacuum} allows us to split $\Phi$
into two branches. In one branch the channel prepares the vacuum and
therefore discards all correlations with a reference system. We shall
show that this branch has positive probability for every input and that
its preceding filter cannot turn a correlated state into a product.
These two facts give the required equality criterion.

Integrating the vacuum weight against the heterodyne POVM defines the
\emph{effect} (a positive operator bounded by the identity)
\begin{equation}\label{eq:vacuumEffect}
 G=\int_{\C^m}c(z)M_{\rm het}(\dd z)
   =3^{-m}2^{-H_m},\qquad 0<G\leq 3^{-m}I.
\end{equation}
Here $0<G$ means that $G$ has zero kernel, not that it has a positive
uniform lower bound. To check the formula, in one mode insert the
coherent-state expansion and integrate in polar coordinates:
\begin{equation}\label{eq:effectIntegral}
 \frac1\pi\int_{\C}e^{-|z|^2}\ket z\bra z\,\dd^2z
 =\sum_{n=0}^{\infty}2^{-(n+1)}\ket n\bra n.
\end{equation}
Multiplication over the modes and the factor $(2/3)^m$ in $c(z)$ give
\eqref{eq:vacuumEffect}.
Equation~\eqref{eq:effectIntegral} is the Glauber--Sudarshan
$P$-representation of the thermal state with one photon per
mode~\cite{Glauber}; thus $G=(2/3)^m\tau_1$ is a multiple of a thermal
state.

Define two completely positive maps by
\begin{align}\label{eq:channelBranches}
 \Lambda_1(\rho)&=\Tr(G\rho)\vac,\notag\\
 \Lambda_0(\rho)&=\int_{\C^m}Q_\rho(z)
                 [\pi_z-c(z)\vac]\,\dd^{2m}z.
\end{align}
The kernel in the second integral is positive by
Lemma~\ref{lem:vacuum}; hence it defines a completely positive
measure-and-prepare map. Its trace is
$\Tr\Lambda_0(\rho)=\Tr[(I-G)\rho]$, and
$\Phi=\Lambda_1+\Lambda_0$. Complete positivity of the integral maps is verified in
Appendix~\ref{sec:instrumentBounds}.

For $f=0,1$, let
\begin{equation}\label{eq:instrumentFilters}
 M_1=G^{1/2},\qquad M_0=(I-G)^{1/2},\qquad
 \mathcal I_f(\rho)=M_f\rho M_f.
\end{equation}
Since $M_0^2+M_1^2=I$, the map
\begin{equation}\label{eq:binaryInstrument}
 \mathcal M(\rho)=\sum_{f=0}^1
       \ket f\bra f_F\otimes\mathcal I_f(\rho)
\end{equation}
is a channel. It records a two-valued outcome in the classical register
$F$ and retains the filtered input, denoted $X'$. Moreover,
$I-G\geq(1-3^{-m})I$, so $(I-G)^{-1/2}$ is bounded and
\begin{equation}\label{eq:residualChannel}
 \Psi_0(\sigma)=\Lambda_0\bigl((I-G)^{-1/2}
                      \sigma(I-G)^{-1/2}\bigr)
\end{equation}
is a channel: it is completely positive, and its trace is $\Tr\sigma$.
After the instrument, replace $X'$ by the vacuum when $F=1$ and apply
$\Psi_0$ when $F=0$. Discarding $F$ then gives exactly $\Phi$, since
$\Psi_0\circ\mathcal I_0=\Lambda_0$.

\begin{proposition}[Strict information loss for $\Phi$]\label{prop:strictMI}
Let $\omega_{XK}$ be a state such that $X$ consists of $m\geq1$ bosonic
modes with $\Tr(\omega_XH_m)<\infty$, and the reference system $K$ has a
separable Hilbert space with $S(\omega_K)<\infty$. Write
$\zeta_{YK}=(\Phi\otimes\id_K)(\omega_{XK})$. Then
\begin{equation}\label{eq:MIrigidity}
 I(X;K)_\omega=I(Y;K)_\zeta
 \quad\Longleftrightarrow\quad
 \omega_{XK}=\omega_X\otimes\omega_K.
\end{equation}
\end{proposition}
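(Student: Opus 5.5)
The direction ($\Leftarrow$) is immediate: a product input stays a product after $\Phi\otimes\id_K$, so both mutual informations vanish. For ($\Rightarrow$) I will factor $\Phi$ through the binary instrument $\mathcal M$ of \eqref{eq:binaryInstrument} and apply data processing along the chain
\[
 X\ \xrightarrow{\ \mathcal M\ }\ FX'\ \longrightarrow\ Y .
\]
Here the second map replaces $X'$ by $\vac$ when $F=1$ and applies $\Psi_0$ when $F=0$; discarding $F$ then yields $\Phi$. Since $\mathcal M$ is a channel,
\[
 I(X;K)_\omega\geq I(FX';K)\geq I(Y;K)_\zeta .
\]
Equality of the two ends therefore forces $I(FX';K)=I(Y;K)$ and $I(X;K)=I(FX';K)$. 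Finiteness follows from the energy bound \eqref{eq:energyentropy} and $S(\omega_K)<\infty$. Because $F$ is classical and the post-processing is controlled by $F$, I expand by the chain rule,
\[
 I(FX';K)=I(F;K)+\sum_f p_f\,I(X';K\mid F=f).
\]
The post-processing acts branchwise, and in the branch $F=1$ it outputs the fixed state $\vac$, which carries no information about $K$.

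The key step is to extract product structure in the vacuum branch. Write $p_1=\Tr(G\omega_X)>0$; this is positive because $G$ has zero kernel by \eqref{eq:vacuumEffect}. The branch-$1$ conditional state on $X'K$ is $\omega^{(1)}_{X'K}=p_1^{-1}(G^{1/2}\otimes I)\omega_{XK}(G^{1/2}\otimes I)$. I will compare $I(FX';K)$ with the mutual information of the state $\sum_f p_f\ket f\bra f\otimes\tilde\sigma^{(f)}_{X''K}$. In this state branch $1$ has been replaced by $\vac\otimes\omega^{(1)}_K$ and branch $0$ has already been mapped through $\Psi_0$. Applying data processing to each branch separately, and noting that the $F$-flagged post-processing is a channel, gives
\[
 I(FX';K)-I(FY;K)\geq p_1\,I(X';K\mid F=1)\geq 0 .
\]
Equality throughout the chain, together with $I(FY;K)\geq I(Y;K)$, then forces $I(X';K\mid F=1)=0$. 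Hence $\omega^{(1)}_{X'K}$ is a product, so $(G^{1/2}\otimes I)\omega_{XK}(G^{1/2}\otimes I)=\alpha_X\otimes\beta_K$ for some positive trace-class operators $\alpha_X,\beta_K$. To be careful, I will write the sum over $f$ explicitly with conditional entropies, rather than appeal to a heuristic.

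Next I invert the filter. Since $G^{1/2}$ is injective with dense range, I will test the identity against $\langle u\otimes k|\cdot|v\otimes l\rangle$, with $u,v$ in the dense set of finite number-basis vectors, where $G^{-1/2}=3^{m/2}2^{H_m/2}$ acts boundedly. Concretely, set $u'=G^{-1/2}u$ and $v'=G^{-1/2}v$ (still finite vectors). Then
\[
 \langle u\otimes k|\omega_{XK}|v\otimes l\rangle=\langle u'|\alpha|v'\rangle\langle k|\beta|l\rangle .
\]
Thus $\omega_{XK}$ agrees with the sesquilinear form of $\tilde\alpha\otimes\beta$ on a dense set of product vectors, where $\tilde\alpha=G^{-1/2}\alpha G^{-1/2}$ is a priori only a form. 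Taking the partial trace over $K$ (testing with $l=k$ and summing over an orthonormal basis) identifies $\tilde\alpha$ with $\omega_X$ up to a scalar, so $\tilde\alpha$ is trace class. Boundedness of $\omega_{XK}$ then extends the identity to everything, giving $\omega_{XK}=\omega_X\otimes\omega_K$.

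I expect the main obstacle to be the rigorous branchwise data-processing step in infinite dimensions. Specifically, I need $I(FX';K)$ to decompose as $I(F;K)$ plus $p_f$-weighted conditional terms with all quantities finite. Branch $0$ may have $X'$ of infinite entropy, but $M_0$ is bounded and invertible, and $\mathcal I_0$ does not increase energy beyond a constant factor, so the entropies stay finite. If this decomposition becomes delicate, I would switch to relative-entropy language: use $I=D(\cdot\Vert\cdot\otimes\cdot)$ with the block-diagonal structure in $F$, which makes the additivity over branches and the monotonicity exact without requiring finite marginal entropies.
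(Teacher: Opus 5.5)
Your proposal is correct and follows essentially the same route as the paper. You factor $\Phi$ through the instrument $\mathcal M$ and compare the flagged chain-rule decompositions to obtain $I(X;K)_\omega-I(Y;K)_\zeta\ge p_1 I(X';K)_{\omega^1}$, then remove the injective filter $G^{1/2}$ on finite number-basis vectors; the only difference is that the paper takes the $X'$ marginal first and cancels $\sqrt{g_{\bm n}g_{\bm n'}}$ blockwise, whereas you identify the $X$ factor via the partial trace over $K$, and you leave implicit that the branch entropies on $K$ are finite by concavity, $\sum_f p_f S(\omega^f_K)\le S(\omega_K)$.
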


\begin{proof}
Let $p_f=\Tr(M_f\omega_XM_f)$. Equation~\eqref{eq:vacuumEffect}
implies $p_1>0$, because all its number-basis eigenvalues are positive,
and $p_0\geq1-3^{-m}>0$. The conditional states after the instrument are
\begin{equation}\label{eq:filteredStates}
 \omega^f_{X'K}
   =p_f^{-1}(M_f\otimes I_K)\omega_{XK}(M_f\otimes I_K).
\end{equation}
Let $\zeta^0_{YK}=(\Psi_0\otimes\id_K)(\omega^0_{X'K})$ and
$\zeta^1_{YK}=\vac_Y\otimes\omega_K^1$. Both transformations preserve
the corresponding $K$ marginal. The state with the flag retained is
$\zeta_{FYK}=\sum_f p_f\ket f\bra f_F\otimes\zeta^f_{YK}$.

All branch and joint entropies below are finite, as verified in
Appendix~\ref{sec:instrumentBounds}. Thus ordinary entropy chain rules apply.

Data processing for $\mathcal M$, followed by the chain rule for a
classical flag, gives
\begin{align}\label{eq:instrumentDPI}
 I(X;K)_\omega
 &\overset{\mathrm{(a)}}\geq I(FX';K)_{(\mathcal M\otimes\id_K)(\omega)}\notag\\
 &= I(F;K)+p_1I(X';K)_{\omega^1}
               +p_0I(X';K)_{\omega^0}.
\end{align}
On the other hand, discarding the flag and then using data processing
for $\Psi_0$ gives
\begin{align}\label{eq:flagOutputDPI}
 I(Y;K)_\zeta
 &\overset{\mathrm{(b)}}\leq I(FY;K)_{\zeta}\notag\\
 &= I(F;K)+p_0 I(Y;K)_{\zeta^0}\notag\\
 &\overset{\mathrm{(c)}}\leq I(F;K)+p_0 I(X';K)_{\omega^0}.
\end{align}
Steps (a), (b), and (c) apply data processing to the instrument,
discarding the flag, and the channel $\Psi_0$, respectively.
The $FK$ marginal is the same in both displays, and the $f=1$ term
vanishes in the second display because $\zeta^1$ is a product.
Subtracting yields
\begin{equation}\label{eq:strictloss}
 I(X;K)_\omega-I(Y;K)_\zeta
       \geq p_1I(X';K)_{\omega^1}.
\end{equation}
If the left side is zero, then $\omega^1=\omega^1_{X'}\otimes\omega^1_K$.
Taking its $X'$ marginal in~\eqref{eq:filteredStates} therefore gives
\begin{equation}\label{eq:filteredProduct}
 (G^{1/2}\otimes I_K)\omega_{XK}(G^{1/2}\otimes I_K)
   =(G^{1/2}\omega_XG^{1/2})\otimes\omega_K^1.
\end{equation}
For number-basis vectors $\ket{\bm n},\ket{\bm n'}$, their eigenvalues
$g_{\bm n}=3^{-m}2^{-\sum_j n_j}$ and $g_{\bm n'}$ are positive.
Taking the corresponding operator block of~\eqref{eq:filteredProduct}
and cancelling $\sqrt{g_{\bm n}g_{\bm n'}}$ gives
\[
 \bra{\bm n}\omega_{XK}\ket{\bm n'}
 =\bra{\bm n}\omega_X\ket{\bm n'}\,\omega_K^1.
\]
Equality of all blocks implies $\omega_{XK}=\omega_X\otimes\omega_K^1$.
Taking the partial trace over $X$ identifies $\omega_K^1=\omega_K$.
The converse holds because a local channel preserves product states.
\end{proof}

\begin{remark}
The filter $G^{1/2}$ is injective, although its inverse is unbounded.
The number-basis argument above uses only its positive eigenvalues;
it does not apply the inverse to a trace-class operator. This distinction
is useful when seeking quantitative versions of~\eqref{eq:strictloss}.
We also note why the standard equality condition for data
processing does not suffice here. By Petz's theorem~\cite{Petz,HJPW},
equality in~\eqref{eq:MIrigidity} implies that $\omega_{XK}$ is
recovered from $\zeta_{YK}$ by a local channel on $Y$; since $\Phi$ is
entanglement breaking, this only shows that $\omega_{XK}$ is separable,
whereas the instrument argument yields a product.
\end{remark}

\subsection{From equality to independent outputs}\label{sec:mainproof}

\begin{proof}[Proof of (i)$\Rightarrow$(ii) in Theorem~\ref{thm:main}]
Let $\omega_{CE}=U_\eta(\rho_A\otimes\rho_B)U_\eta^\dagger$ and assume
$L_\eta(A,B)=0$.

\paragraph{1. Add a thermal equality pair.}
In Lemma~\ref{lem:doubling}, set
\begin{equation}\label{eq:thermalaux}
 (\rho_{A_1},\rho_{B_1})=(\rho_A,\rho_B),\qquad
 (\rho_{A_2},\rho_{B_2})=(\tau_2,\tau_2).
\end{equation}
All four inputs are independent. If the original energies are $\mathsf E_A,\mathsf E_B$,
the total energy in this circuit is $\mathsf E_A+\mathsf E_B+4m$. Every beam splitter preserves
that total, so the energy bound~\eqref{eq:energyentropy} makes all circuit entropies finite.
The auxiliary state has the form
\[
 (\tau_2)_{A_2}\otimes(\tau_2)_{B_2}
 =3^{-2m}\exp[-\log(3/2)(H_{A_2}+H_{B_2})].
\]
Since $U_\eta$ preserves $H_{A_2}+H_{B_2}$, it preserves this state:
\begin{equation}\label{eq:expandedthermalout}
 U_\eta(\tau_2\otimes\tau_2)U_\eta^\dagger
   =(\tau_2)_{G_C}\otimes(\tau_2)_{G_E}.
\end{equation}
In particular, the auxiliary outputs are independent and
$L_\eta(A_2,B_2)=m g(2)-\eta m g(2)-(1-\eta)m g(2)=0$.

\paragraph{2. Apply the doubling identity.}
Use the first circuit order in Fig.~\ref{fig:circuit}(a), with
$M=A_+B_+$. Lemma~\ref{lem:doubling}, at $p=\eta$, gives
\begin{equation}\label{eq:expandedzeroidentity}
 0=L_\eta(A_+,B_+)+L_\eta(A_-,B_-\mid M)
       +I(C_-;E_+\mid C_+)_\Xi.
\end{equation}
The lemma verifies the conditional-independence hypothesis needed for
the conditional deficit. All three terms are nonnegative, hence
\begin{equation}\label{eq:Jzero}
 I(C_-;E_+\mid C_+)_\Xi=0.
\end{equation}

\paragraph{3. Read the circuit as local thermal noise.}
By the commutation of the two kinds of beam splitters proved in
Lemma~\ref{lem:doubling}, the second circuit order has the same
four-output state $\Xi$. Immediately after the $\eta$-beam splitters,
the state is
\begin{equation}\label{eq:fourout}
 \omega_{CE}\otimes(\tau_2)_{G_C}\otimes(\tau_2)_{G_E}.
\end{equation}
Mix $E,G_E$ on the balanced beam splitter and retain $J:=E_+$.
The state before mixing $C,G_C$ is
\begin{equation}\label{eq:sigma}
 \sigma_{CJ}\otimes(\tau_2)_{G_C},\qquad
 \sigma_{CJ}=(\id_C\otimes\Phi)(\omega_{CE}).
\end{equation}
The auxiliary $G_C$ is independent of the entire pair $CJ$ because it
is a separate factor in~\eqref{eq:fourout}. After the remaining
balanced beam splitter, the $C_+J$ marginal is
\begin{equation}\label{eq:expandedlocalPhi}
 \Xi_{C_+J}=(\Phi\otimes\id_J)(\sigma_{CJ}).
\end{equation}

\paragraph{4. Convert zero conditional information into zero channel loss.}
Independence of $G_C$ and unitary invariance give
\begin{align}\label{eq:bigMI}
 I(C_+C_-;J)_\Xi
 &=I(CG_C;J)_{\sigma\otimes\tau_2}\notag\\
 &=S(\sigma_C)+S(\tau_2)+S(\sigma_J)-S(\sigma_{CJ})-S(\tau_2)\notag\\
 &=I(C;J)_\sigma.
\end{align}
On the other hand, the chain rule and~\eqref{eq:Jzero} give
\begin{align}\label{eq:smallMI}
 I(C_+C_-;J)_\Xi
 &=I(C_+;J)_\Xi+I(C_-;J\mid C_+)_\Xi=I(C_+;J)_\Xi.
\end{align}
Combining these identities with~\eqref{eq:expandedlocalPhi} yields
\begin{equation}\label{eq:expandedequalityDPI}
 I(C;J)_\sigma
 =I(C_+;J)_{(\Phi\otimes\id_J)(\sigma)}.
\end{equation}
Proposition~\ref{prop:strictMI} applies because $\sigma_C=\omega_C$
has finite energy and $S(\sigma_J)<\infty$. It makes $\sigma_{CJ}$ a product. Its marginals are
$\sigma_C=\omega_C$ and $\sigma_J=\Phi(\omega_E)$, so
\begin{equation}\label{eq:smoothedprod}
 (\id_C\otimes\Phi)(\omega_{CE})=\omega_C\otimes\Phi(\omega_E).
\end{equation}

\paragraph{5. Remove the noise on the other output.}
For $u,w\in\C^m$, write
$\widehat\omega(u,w)=\Tr\omega_{CE}[\D_C(u)\otimes\D_E(w)]$.
The channel formula~\eqref{eq:PhiCharacteristic} applies also to one
factor of a joint state. Equation~\eqref{eq:smoothedprod} therefore gives
\[
 \widehat\omega(u,w/\sqrt2)e^{-5|w|^2/4}
 =\widehat\omega_C(u)\widehat\omega_E(w/\sqrt2)e^{-5|w|^2/4}.
\]
The multiplier is nonzero, and $w\mapsto w/\sqrt2$ is onto $\C^m$.
Cancelling it yields
\[
 \widehat\omega(u,v)=\widehat\omega_C(u)\widehat\omega_E(v)
 \quad(u,v\in\C^m).
\]
Uniqueness of Weyl characteristic functions now proves
$\omega_{CE}=\omega_C\otimes\omega_E$.
\end{proof}

\subsection{Gaussian characterization and the converse}

Equation~\eqref{eq:outputdeficit} already gives (ii)$\Rightarrow$(i),
because its two deficits are nonnegative.

\begin{proof}[Proof of (ii)$\Rightarrow$(iii)]
The finite-energy quantum Darmois--Skitovich characterization,
Proposition~\ref{prop:quantumDS}, applies to the independent inputs and
product outputs. It gives Gaussianity and equality of covariance
matrices. Appendix~\ref{sec:DS} gives the full proof by reducing each
quadrature direction to two independent scalar random variables.
Their orthogonal mixtures are independent, which forces Gaussian laws
with equal variances. The means remain unrestricted.
\end{proof}

\begin{proof}[Proof of (iii)$\Rightarrow$(i), including product outputs]
Let the Gaussian inputs have covariance $V$ and means $d_A,d_B$.
Set $a=\sqrt\eta$, $b=\sqrt{1-\eta}$. The Gaussian output has
\begin{align}\label{eq:expandedGaussianblocks}
 d_C&=a d_A+b d_B,\qquad d_E=-b d_A+a d_B,\notag\\
 V_{CE}
 &=\begin{pmatrix}a^2V+b^2V&-abV+abV\\-abV+abV&b^2V+a^2V\end{pmatrix}\notag\\
 &=\begin{pmatrix}V&0\\0&V\end{pmatrix}.
\end{align}
Its characteristic function factorizes, so the outputs are a product.
Gaussian entropy depends only on the covariance; consequently
$S(C)=S(A)=S(B)$ and $L_\eta(A,B)=0$.
\end{proof}

\begin{remark}
Finite energy controls the circuit entropies and the scalar quadrature
variances. The restriction $0<\eta<1$ ensures that both mixing
coefficients are nonzero. At either endpoint, arbitrary inputs already
have zero linear deficit.
\end{remark}

An immediate consequence concerns optimization at fixed covariance.
\begin{corollary}[Uniqueness at fixed covariance]\label{cor:fixedcov}
Fix a physical covariance matrix $V$ and $0<\eta<1$. Among independent centered
$m$-mode input pairs whose two covariance matrices equal $V$, the linear deficit
$S(C)-\eta S(A)-(1-\eta)S(B)$ has the unique minimizer
$(\gamma_V,\gamma_V)$, with minimum zero. Here $\gamma_V$ is the
centered Gaussian state of covariance $V$.
\end{corollary}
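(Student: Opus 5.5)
The plan is to read the corollary off Theorem~\ref{thm:main}, after checking two things: the proposed minimizer is admissible, and the deficit is nonnegative on the whole admissible class.

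First, admissibility and the value at $(\gamma_V,\gamma_V)$. The state $\gamma_V$ exists because $V$ is physical. It is centered and has covariance $V$. It has finite energy, since by~\eqref{eq:energycov} its energy is $(\operatorname{tr}V-m)/2<\infty$. So $(\gamma_V,\gamma_V)$ lies in the admissible class. The implication (iii)$\Rightarrow$(i) of Theorem~\ref{thm:main} then gives zero deficit at this pair. One can also see this directly from~\eqref{eq:expandedGaussianblocks}: the output $C$ is $\gamma_V$, so $S(C)=S(A)=S(B)$.

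Second, nonnegativity. Every admissible pair has covariance $V$, hence finite energy $(\operatorname{tr}V-m)/2$ by~\eqref{eq:energycov}. All entropies are then finite by~\eqref{eq:energyentropy}, and the linear qEPI~\eqref{eq:introlinear} shows the deficit is $\geq0$. So the minimum value is exactly zero, attained at $(\gamma_V,\gamma_V)$.

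Third, uniqueness. Suppose an admissible pair $(\rho_A,\rho_B)$ attains the minimum, so its deficit is zero. The pair consists of independent finite-energy states, so Theorem~\ref{thm:main} applies: (i)$\Rightarrow$(iii) says both inputs are Gaussian with a common covariance matrix. That covariance is $V$ by hypothesis, and both means vanish because the inputs are centered. A Gaussian state is determined by its mean and covariance through~\eqref{eq:GaussianDef}, so $\rho_A=\rho_B=\gamma_V$.

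There is no real obstacle here. The one point needing care is that fixing a covariance matrix already forces finite energy, which is what licenses the use of Theorem~\ref{thm:main}; \eqref{eq:energycov} supplies this. The centering hypothesis is what removes the displacement freedom that Theorem~\ref{thm:main} otherwise leaves open.
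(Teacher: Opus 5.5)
Your proposal is correct and follows the paper's own proof: nonnegativity comes from the linear qEPI, the zero set comes from Theorem~\ref{thm:main}, and centering plus uniqueness of a Gaussian state with prescribed mean and covariance give $(\gamma_V,\gamma_V)$. Your explicit check, via \eqref{eq:energycov}, that fixing the covariance already forces finite energy is a detail the paper leaves implicit.
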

\begin{proof}
The qEPI makes the deficit nonnegative. Theorem~\ref{thm:main} and
uniqueness of a Gaussian state with prescribed mean and covariance
identify its zero set.
\end{proof}

\section{Equality in the exponential qEPI}\label{sec:exp}

The exponential inequality itself is established: the balanced case is
\cite[Theorem~VII.2, Eq.~(89)]{KS}, and arbitrary transmissivity is
\cite[Eq.~(5)]{DPMG}. We derive its equality characterization from Theorem~\ref{thm:main}.

The proof has two distinct steps. We first propagate zero affine
deficit to a Gaussian limit. We then use strictness within the Gaussian
family to recover linear equality for the original inputs.

\subsection{The fixed affine weight}
Unlike classical differential entropy, bosonic entropy has no scaling
rule that converts a change of quadrature scale into an additive entropy
constant. One should therefore distinguish the single linear inequality
\eqref{eq:introlinear} from the full affine family equivalent to
\eqref{eq:introexponential}. We use that family explicitly when treating
exponential equality.

The equivalence between the complete affine family and exponential qEPI
is the variational reformulation used in
\cite[Eqs.~(106), (107), (115)]{DPT}. The following relative-entropy
identity spells out the elementary algebra and its equality condition.
Set
\begin{align}
 \mathcal Z_\eta&=\eta e^{S(A)/m}+(1-\eta)e^{S(B)/m},\notag\\
 \delta&=S(C)-m\log \mathcal Z_\eta,\qquad
 p_*=\frac{\eta e^{S(A)/m}}{\mathcal Z_\eta}.
\end{align}
The finite input entropies ensure $\mathcal Z_\eta>0$ and $p_*\in(0,1)$.
For a direct expansion, observe
\begin{align}
 \log p_*&=\log\eta+S(A)/m-\log \mathcal Z_\eta,\notag\\
 \log(1-p_*)&=\log(1-\eta)+S(B)/m-\log \mathcal Z_\eta.
\end{align}
Substitution into \eqref{eq:binaryD} gives
\begin{equation}
 m d(p\Vert p_*)=m d(p\Vert\eta)-pS(A)-(1-p)S(B)+m\log \mathcal Z_\eta.
\end{equation}
Adding $\delta=S(C)-m\log \mathcal Z_\eta$ proves
\begin{equation}\label{eq:affinedelta}
 F_p(A,B)=\delta+m\,d(p\Vert p_*).
\end{equation}
Thus exponential equality is $F_{p_*}=0$. Ordinary linear equality is
$F_\eta=0$. Given the established qEPI, \eqref{eq:affinedelta} implies
\begin{equation}
 L_\eta=0\quad\Longleftrightarrow\quad
 \delta=0\ \text{and}\ S(A)=S(B).
\end{equation}
The affine inequalities for all $p$ together are equivalent to the
exponential qEPI; the single case $p=\eta$ is the linear qEPI.

\subsection{Strict Gaussian entropy-power concavity}\label{sec:Gaussian}

Gaussian entropy is an average of log determinants over
$x\in(-1/2,1/2)$. This representation allows Minkowski's determinant
inequality to be applied pointwise. The entropy formula follows
from~\cite[Eq.~(3.27)]{HW}; the determinant inequality and its equality
case are recalled in Appendix~\ref{sec:detanalytic}. The following
Gaussian equality characterization can also be deduced from the EPnI
for Gaussian states~\cite[Appendix~A.8]{DePalma}, combined with the
strict concavity of $N\mapsto e^{g(N)}$ and of $V\mapsto S(\gamma_V)$.
We include a direct proof based on a log-determinant representation of
Gaussian entropy.

\begin{lemma}\label{lem:Gaussian}
Let $V_A,V_B$ be covariance matrices satisfying $V+i\Omega/2\geq0$,
and let $p\in[0,1]$ and $0<\eta<1$. Then
\begin{equation}\label{eq:Gzero}
 F_p(\gamma_{V_A},\gamma_{V_B})=0
 \quad\Longleftrightarrow\quad V_A=V_B\ \text{and}\ p=\eta.
\end{equation}
\end{lemma}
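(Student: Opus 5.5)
The plan is to use the affine identity \eqref{eq:affinedelta}, $F_p=\delta+m\,d(p\Vert p_*)$, which splits the claim into two parts. Since $d(p\Vert p_*)\ge0$ with equality iff $p=p_*$, and $\delta\ge0$ by the exponential qEPI, we get $F_p(\gamma_{V_A},\gamma_{V_B})=0$ if and only if $\delta=0$ and $p=p_*$. It therefore suffices to show that, for Gaussian inputs, $\delta=0$ holds iff $V_A=V_B$. Once that is established, $S(A)=S(B)$ gives $p_*=\eta$, which yields the stated equivalence. The direction ``$\Leftarrow$'' is immediate: with $V_A=V_B=V$ we have $V_C=V$, so all three entropies coincide and $F_\eta=0$.

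For ``$\Rightarrow$'' I would prove a strict Gaussian entropy-power concavity directly. The first step is to write Gaussian entropy as an integral of log-determinants. For a symplectic eigenvalue $\nu\ge1/2$,
\[
g(\nu-\tfrac12)=\int_{-1/2}^{1/2}\log(\nu+x)\,\dd x+\text{const}
\]
up to an additive constant $c_0$. This can be checked by differentiating in $\nu$: both sides have derivative $\log\frac{\nu+1/2}{\nu-1/2}$. Because the eigenvalues of $iV\Omega$ (suitably normalized) are $\pm\nu_j$, this gives
\[
S(\gamma_V)=\int_{-1/2}^{1/2}\tfrac12\log\det(V+x\,i\Omega)\,\dd x+mc_0 .
\]
Here $\det(V+xi\Omega)=\prod_j(\nu_j^2-x^2)$. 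Cleaner still, $\log\det(V+xi\Omega)$ is a concave function of $V$ on the cone $V+i\Omega/2\ge0$ for $|x|<1/2$, since $V+xi\Omega>0$ there. The sharper statement I want is $m$-th-root concavity: $\det(\cdot)^{1/(2m)}$ on positive Hermitian $2m\times2m$ matrices satisfies Minkowski's inequality,
\[
\det(\eta V_A+(1-\eta)V_B+xi\Omega)^{1/2m}\ge\eta\det(V_A+xi\Omega)^{1/2m}+(1-\eta)\det(V_B+xi\Omega)^{1/2m},
\]
with equality iff the two matrices are proportional.

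The second step combines these pointwise bounds into the exponential inequality $e^{S(C)/m}\ge\eta e^{S(A)/m}+(1-\eta)e^{S(B)/m}$. Write $h_X(x)=\frac1{m}\log\det(V_X+xi\Omega)^{1/2}$, so that $S(X)/m=\int h_X+c_0$. Minkowski gives $e^{h_C(x)}\ge\eta e^{h_A(x)}+(1-\eta)e^{h_B(x)}$ for every $x$. Integrating then requires $\exp\int\log(\eta e^{h_A}+(1-\eta)e^{h_B})\ge\eta e^{\int h_A}+(1-\eta)e^{\int h_B}$. This is the generalized (Hölder/Minkowski for geometric means) inequality: the map $(a,b)\mapsto\log(\eta e^{a}+(1-\eta)e^b)$ is convex, and the functional $f\mapsto\exp\int\log f$ is superadditive and concave on positive functions. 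For equality $\delta=0$, Minkowski must be tight for almost every $x$. Then $V_A+xi\Omega=\lambda(x)(V_B+xi\Omega)$; comparing the real and imaginary parts at some $x\ne0$ forces $\lambda=1$, hence $V_A=V_B$.

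The main obstacle is the boundary case where $V+xi\Omega$ degenerates at $|x|=1/2$ (pure modes, $\nu_j=1/2$), because the integrand then has logarithmic singularities only at the endpoints. This should be harmless since they are integrable. The other delicate point is the equality case of the geometric-mean superadditivity, but it is not needed: tightness of pointwise Minkowski already forces $V_A=V_B$.
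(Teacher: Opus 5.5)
Your proposal is correct and follows essentially the paper's proof: the log-determinant representation of $S(\gamma_V)$, pointwise Minkowski on $V+ix\Omega$ with the $x\neq0$ proportionality argument forcing $V_A=V_B$, concavity of the integral geometric mean, strict monotonicity of the integral forcing almost-everywhere tightness, and the affine identity $F_p=\delta+m\,d(p\Vert p_*)$. Your Jensen argument with the convex map $(a,b)\mapsto\log(\eta e^a+(1-\eta)e^b)$ is an equivalent form of the paper's normalization step, and the remaining difference is bookkeeping: you characterize $\delta=0$ together with $p=p_*$, whereas the paper proves $\delta>0$ when $V_A\neq V_B$ and treats $V_A=V_B$ separately via $F_p=m\,d(p\Vert\eta)$.
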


\begin{proof}
Recall $g$ from \eqref{eq:energyentropy}, including $g(0)=0$.
Williamson decomposition \eqref{eq:Williamson} gives
\begin{equation}\label{eq:integralS}
 S(\gamma_V)=m+\frac12\int_{-1/2}^{1/2}\log\det(V+ix\Omega)\,\dd x.
\end{equation}
Indeed, $V+ix\Omega=\mathsf S(\operatorname{diag}(\nu_1,\nu_1,\ldots)+ix\Omega)\mathsf S^T$.
Since $(\det\mathsf S)^2=1$, the determinant equals
$\prod_{j=1}^m(\nu_j^2-x^2)$. For each $\nu\geq1/2$, split
$\log(\nu^2-x^2)=\log(\nu-x)+\log(\nu+x)$ and integrate:
\begin{equation}
 1+\frac12\int_{-1/2}^{1/2}\log(\nu^2-x^2)\,\dd x
 =g(\nu-1/2).
\end{equation}
The left side equals
$1+\int_{-1/2}^{1/2}\log(\nu+x)\dd x
=(\nu+1/2)\log(\nu+1/2)-(\nu-1/2)\log(\nu-1/2)$.
This is $g(\nu-1/2)$. Endpoint integrals are improper when $\nu=1/2$,
but their logarithmic singularities are integrable. Summing and using
\eqref{eq:GaussianEntropy} proves \eqref{eq:integralS}.

For $|x|<1/2$, $V+ix\Omega$ is positive definite Hermitian. One way to see
this is to write it as a convex combination of $V+i\Omega/2$ and
$V-i\Omega/2$ with strictly positive coefficients; their kernels cannot
intersect because their sum is $2V>0$. Let
\begin{align}
 f_V(x)&=\det(V+ix\Omega)^{1/(2m)},\notag\\
 \mathcal G(f)&=\exp\left(\int_{-1/2}^{1/2}\log f(x)\,\dd x\right).
\end{align}
Minkowski's determinant inequality~\cite{HJ} is strict for positive
definite matrices unless they are positively proportional.
Apply this result pointwise to the positive Hermitian matrices
$V_A+ix\Omega$ and $V_B+ix\Omega$. It gives
\begin{equation}\label{eq:detstrict}
 f_{\eta V_A+(1-\eta)V_B}(x)
 \geq\eta f_{V_A}(x)+(1-\eta)f_{V_B}(x).
\end{equation}
Its equality condition for positive definite matrices is positive
proportionality. For $x\ne0$, proportionality
$V_A+ix\Omega=c(V_B+ix\Omega)$ forces $c=1$ by the imaginary parts,
and then $V_A=V_B$. Thus \eqref{eq:detstrict} is strict on a set of positive
measure whenever the covariance matrices differ.

We next use concavity of the integral geometric mean. If
$a=\mathcal G(f)>0$, $b=\mathcal G(h)>0$, and
$r=\eta a/(\eta a+(1-\eta)b)$, then
\begin{equation}
 \frac{\eta f+(1-\eta)h}{\eta a+(1-\eta)b}
 =r\frac{f}{a}+(1-r)\frac{h}{b}.
\end{equation}
Integrating pointwise log concavity proves
$\mathcal G(\eta f+(1-\eta)h)\geq\eta a+(1-\eta)b$.
All integrals are over the interval $(-1/2,1/2)$ of length one. The
logarithms of $f_{V_A}$ and $f_{V_B}$ are integrable by the factorization
above. The logarithm of their positive weighted sum is integrable as well:
it is bounded above, and bounded below by $\log\eta+\log f_{V_A}$.
Moreover $\mathcal G$ strictly increases under a pointwise increase on a set
of positive measure: the integral of the nonnegative log-ratio is then
strictly positive. Thus the strict pointwise determinant inequality survives
integration and exponentiation.
Since \eqref{eq:integralS} gives $e^{S(\gamma_V)/m}=e\mathcal G(f_V)$,
putting $V_C=\eta V_A+(1-\eta)V_B$ gives
\begin{equation}\label{eq:Qstrict}
 e^{S(\gamma_{V_C})/m}
 >\eta e^{S(\gamma_{V_A})/m}+(1-\eta)e^{S(\gamma_{V_B})/m}
\end{equation}
whenever $V_A\ne V_B$.
The affine identity \eqref{eq:affinedelta} excludes a zero affine deficit in
that case. When $V_A=V_B=V$, all three Gaussian entropies agree, so
$F_p(\gamma_V,\gamma_V)=m\,d(p\Vert\eta)$, proving \eqref{eq:Gzero}.
\end{proof}

\subsection{Quantum convolution and the finite-energy limit}
\label{sec:expandedCLT}

Only the following standard limit statements are needed: the
trace-norm quantum CLT~\cite[Theorem~5]{BDLR} and entropy continuity
under an oscillator energy bound~\cite[Lemma~15]{Winter}.
We verify their hypotheses for the present iterates below.

\paragraph{The $n$-fold quantum convolution.}
For an $m$-mode state $\rho$ and an integer $n\geq1$, take $n$ independent
copies of $\rho$.  Apply a passive unitary whose first output quadrature
vector is
\begin{equation}\label{eq:normalizedsum}
 R_{\mathrm{out},1}=\frac1{\sqrt n}\sum_{j=1}^n R_j,
\end{equation}
and discard the other outputs.  Here $R_j$ is the $2m$-component quadrature
vector on copy $j$.  Such a unitary exists: complete the unit vector
$n^{-1/2}(1,\ldots,1)$ to a real orthogonal matrix on the copy index and
apply this matrix to every quadrature component.  The resulting matrix is
both orthogonal and symplectic.  Denote the retained state by
$\rho^{\boxplus n}$, and put $\rho^{\boxplus1}=\rho$.
Independence of the inputs and commutation of observables on distinct
copies give
\begin{equation}\label{eq:convolutioncharacteristic}
 \chi_{\rho^{\boxplus n}}(\xi)
 =\chi_\rho(\xi/\sqrt n)^n,
 \qquad \xi\in\mathbb R^{2m}.
\end{equation}
This also shows that the retained state is independent of the chosen
orthogonal completion.  The state $\rho^{\boxplus n}$ is the $n$-fold symmetric quantum
convolution of $\rho$ studied in~\cite{BDLR,BGM,BM}; the characteristic
function~\eqref{eq:convolutioncharacteristic} shows that it coincides
with the $n$-fold convolution defined there. For $n=2$ it is
$\B_{1/2}(\rho\otimes\rho)$, and repeated balanced self-convolution gives
$\rho^{\boxplus 2^k}$ after $k$ steps.

If $\rho$ has displacement $d$ and covariance $V$, then
$\rho^{\boxplus n}$ has displacement $\sqrt n\,d$ and covariance $V$.
In particular, for a centered state, all these convolutions have the same
oscillator energy
\begin{equation}\label{eq:convolutionfixedenergy}
 \mathsf E=\Tr\rho H_m=\frac{\operatorname{tr}V-m}{2}.
\end{equation}
This is why centering precedes the limit argument.

Winter's Lemma~15~\cite{Winter} yields the explicit bound
\begin{equation}\label{eq:winterbound}
 |S(\rho)-S(\sigma)|
 \leq 2\epsilon m\,g\!\left(\frac{\mathsf E}{m\epsilon}\right)+h_2(\epsilon),
\end{equation}
where $\epsilon=\tfrac12\norm{\rho-\sigma}_1\in(0,1]$ and both oscillator
energies are at most $\mathsf E$. Here
$h_2(t)=-t\log t-(1-t)\log(1-t)$.
The right side tends to zero with $\epsilon$. The following lemmas record the precise continuity and limit statements.

\begin{lemma}[Entropy continuity under an oscillator energy bound]
\label{lem:entropycontinuity}
Let $\omega_j$ and $\omega$ be $m$-mode states such that
\[
 \norm{\omega_j-\omega}_1\longrightarrow0,\qquad
 \max\{\Tr\omega_jH_m,\Tr\omega H_m\}\leq\mathsf E<\infty.
\]
Then $S(\omega_j)\to S(\omega)$.
\end{lemma}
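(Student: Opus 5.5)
The plan is to apply Winter's bound \eqref{eq:winterbound} directly to each pair $(\omega_j,\omega)$, and then check that its right-hand side tends to zero. Set
\[
\epsilon_j=\tfrac12\norm{\omega_j-\omega}_1\in[0,1].
\]
If $\epsilon_j=0$, then $\omega_j=\omega$ and there is nothing to prove. Otherwise $\epsilon_j\in(0,1]$. Both oscillator energies are at most $\mathsf E$ by hypothesis, so \eqref{eq:winterbound} gives
\[
|S(\omega_j)-S(\omega)|\le 2\epsilon_j m\,g\!\left(\frac{\mathsf E}{m\epsilon_j}\right)+h_2(\epsilon_j).
\]
All entropies here are finite by the Gibbs bound \eqref{eq:energyentropy}, so the difference is meaningful.

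The second term is easy: $h_2(t)\to0$ as $t\to0^+$, by continuity of $h_2$ on $[0,1]$ with $h_2(0)=0$.

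The first term needs the growth of $g$. From the definition,
\[
g(x)=\log(x+1)+x\log\!\left(1+\tfrac1x\right)\le\log(x+1)+1 .
\]
With $x=\mathsf E/(m\epsilon)$ this gives
\[
2\epsilon m\,g\!\left(\frac{\mathsf E}{m\epsilon}\right)\le 2m\epsilon\left[\log\!\left(1+\frac{\mathsf E}{m\epsilon}\right)+1\right].
\]
The right side tends to $0$ as $\epsilon\to0^+$, because $\epsilon\log(1/\epsilon)\to0$. If $\mathsf E=0$, then $g(0)=0$ and the term vanishes outright.

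Since $\epsilon_j\to0$ by the trace-norm hypothesis, both terms vanish and $S(\omega_j)\to S(\omega)$.

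The only step needing care is matching the hypotheses and normalization of \cite[Lemma~15]{Winter}. That lemma is stated with a Hamiltonian and energy threshold, possibly with $\epsilon$ defined through the full trace norm or with a constraint such as $\mathsf E/m\ge$ a positive constant. I would first confirm that \eqref{eq:winterbound}, as quoted, holds for all $\epsilon\in(0,1]$ and all $\mathsf E\ge0$. If a threshold appears, it is enough to enlarge $\mathsf E$: the energy hypothesis remains valid for any larger bound, and the limit argument above does not depend on the particular value of $\mathsf E$.
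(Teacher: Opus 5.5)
Your proposal is correct and follows the paper's proof, which simply applies Winter's bound \eqref{eq:winterbound} to each pair $(\omega_j,\omega)$. Your estimate $g(x)\le\log(1+x)+1$ is a useful addition: it makes explicit the paper's unproved remark that the right side of that bound tends to zero with $\epsilon$.
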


\begin{proof}
It follows directly from~\eqref{eq:winterbound}. 
\end{proof}

\begin{lemma}[Finite-energy convolution central limit theorem]
\label{lem:convolutionCLT}
Let $\rho$ be a centered $m$-mode state of finite energy and covariance
$V$.  Let $\gamma_V$ be the centered Gaussian state with covariance $V$.
Then
\begin{equation}\label{eq:CLTfullconclusion}
 \norm{\rho^{\boxplus n}-\gamma_V}_1\longrightarrow0,
 \qquad
 S(\rho^{\boxplus n})\longrightarrow S(\gamma_V).
\end{equation}
\end{lemma}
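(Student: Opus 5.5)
The plan has two parts: trace-norm convergence, imported from the quantum central limit theorem, and entropy convergence, deduced from Lemma~\ref{lem:entropycontinuity} using a uniform energy bound.

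\textbf{Trace-norm convergence.} Invoke the trace-norm quantum CLT \cite[Theorem~5]{BDLR}. Its hypotheses are a centered state with finite second moments of the quadratures. Centering is assumed. Finite second moments follow from finite energy via \eqref{eq:energycov}, which expresses $\Tr\rho H_m$ through $\operatorname{tr}V_\rho$ and $|d_\rho|^2$. The only bookkeeping is to check that the cited convolution agrees with our $\rho^{\boxplus n}$. Both are determined by the characteristic function $\chi_\rho(\xi/\sqrt n)^n$ in \eqref{eq:convolutioncharacteristic}, up to the change of real argument noted after \eqref{eq:realcomplex}, and the limiting Gaussian has the same covariance $V$. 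This gives $\norm{\rho^{\boxplus n}-\gamma_V}_1\to0$.

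If the citation's convention were unclear, I would fall back on a weaker statement. Expanding $\chi_\rho$ to second order, using finite second moments and zero mean, gives pointwise convergence $\chi_\rho(\xi/\sqrt n)^n\to e^{-\xi^TV\xi/2}=\chi_{\gamma_V}(\xi)$. Pointwise convergence of characteristic functions yields weak-* convergence of states. Since the limit is a state, weak-* convergence upgrades to trace-norm convergence: use tightness from the uniform energy bound below together with the quantum analogue of Scheffé's lemma (Grümm's theorem).

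\textbf{Entropy convergence.} The energies are uniformly bounded. By the remark preceding \eqref{eq:convolutionfixedenergy}, every $\rho^{\boxplus n}$ is centered with covariance $V$, so its energy equals $\mathsf E=(\operatorname{tr}V-m)/2$. The limit $\gamma_V$ is centered with the same covariance, so it has the same energy $\mathsf E$. Lemma~\ref{lem:entropycontinuity} therefore applies with this $\mathsf E$ and gives $S(\rho^{\boxplus n})\to S(\gamma_V)$. When $\mathsf E=0$, all states are the vacuum and the claim is trivial. If the Winter bound \eqref{eq:winterbound} needs $\mathsf E>0$, use any larger positive constant.

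\textbf{Main obstacle.} The only delicate point is matching the hypotheses and normalization of the imported CLT: the quadrature convention, and the fact that convergence holds along all $n$ rather than only $n=2^k$. The entropy step is routine once energy conservation under the passive convolution is recorded.
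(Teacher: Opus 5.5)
Your proposal is correct and follows the paper's proof: trace-norm convergence is taken from the Cushen--Hudson theorem in the form of \cite[Theorem~5]{BDLR}, and entropy convergence follows from Lemma~\ref{lem:entropycontinuity}, because centering and the fixed covariance place every $\rho^{\boxplus n}$ and $\gamma_V$ at the same energy $\mathsf E=(\operatorname{tr}V-m)/2$. Your fallback, which combines pointwise convergence of the characteristic functions with compactness of the energy ball, is a sound safeguard that the paper does not need.
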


\begin{proof}
The finite-energy assumption gives all second moments. The
trace-norm convergence is the Cushen--Hudson theorem in the form
of~\cite[Theorem~5]{BDLR}; quantitative rates under higher moments are
given in~\cite{BDLR,BGM}.
Centering and~\eqref{eq:convolutionfixedenergy} keep every iterate,
and its Gaussian limit, in the same energy ball. Then Lemma~\ref{lem:entropycontinuity}
 gives the desired entropy convergence.
\end{proof}

\subsection{Proof of Theorem~\ref{thm:exp}}

\begin{proof}[Proof of Theorem~\ref{thm:exp}]
Different displacements of the inputs do not
alter any entropy, hence we can assume the displacements are all $0$. Fix $p=p_*$ from the original centered pair. Exponential
equality means $F_p(A,B)=0$. For $X=A,B,C$ and each integer $k\geq0$,
let $\rho_X^{(k)}=\rho_X^{\boxplus 2^k}$.
Apply Lemma~\ref{lem:doubling} to two independent copies of
$(\rho_A^{(k)},\rho_B^{(k)})$. Its plus pair is the next iterate, so
\begin{equation}\label{eq:iterationbound}
 0\leq F_p(A^{(k+1)},B^{(k+1)})
 \leq 2F_p(A^{(k)},B^{(k)}).
\end{equation}
The discarded terms are the conditional deficit and a CMI. The
conditional qEPI is used at this fixed $p=p_*$, which need not yet equal
$\eta$. Since the initial deficit is zero, induction gives
\begin{equation}\label{eq:iterzero}
 F_p(A^{(k)},B^{(k)})=0\qquad(k\geq0).
\end{equation}
That the output of the iterated pair is $\rho_C^{(k)}$ follows from
commutation of the two kinds of mixing.
The same $p$ is retained for all $k$.

For each $X=A,B,C$, the definition of the $n$-fold convolution and
independence give $d_{X^{(k)}}=0$ and $V_{X^{(k)}}=V_X$.
Here $V_C=\eta V_A+(1-\eta)V_B$ is fixed as well.
By \eqref{eq:energycov}, each of the three sequences has its own fixed finite
energy. Lemma~\ref{lem:convolutionCLT} therefore gives trace-norm convergence
to $\gamma_{V_A}$, $\gamma_{V_B}$, and $\gamma_{V_C}$, respectively, and
Lemma~\ref{lem:entropycontinuity} gives convergence of all three entropies.
The coefficient $p$ and the constant $m d(p\Vert\eta)$ remain unchanged.
We may consequently take the limit in the three finite entropy terms of
\eqref{eq:iterzero}, obtaining
\begin{equation}
 F_p(\gamma_{V_A},\gamma_{V_B})=0.
\end{equation}
Lemma~\ref{lem:Gaussian} yields $V_A=V_B$ and $p=p_*=\eta$.
The formula for $p_*$ then gives $S(A)=S(B)$, and
$L_\eta(A,B)=F_\eta(A,B)=0$. Theorem~\ref{thm:main} proves Gaussianity.

Conversely, matched-covariance Gaussian inputs have
$S(C)=S(A)=S(B)$ and satisfy exponential equality directly.
\end{proof}

\section{Discussion}\label{sec:discussion}

The equality condition for the bosonic qEPI can be expressed entirely
in terms of the two output systems: an independent finite-energy pair
attains either entropy inequality precisely when its beam-splitter
outputs are independent. The Gaussian characterization then follows
from the quantum Darmois--Skitovich theorem. The thermal auxiliary is
what connects an entropy condition on one output to this structural
condition on both outputs.

The method also separates two roles of Gaussian states. The auxiliary
thermal states are introduced at a fixed finite energy to obtain strict
information loss. The Gaussian limit in the exponential proof serves
a different purpose: it determines the affine weight and the input
covariances. The linear equality proof is complete before that limit
argument is used.

A natural extension of the present work is a quantitative version of
the qEPI: when the deficit is small, the input states should be close to
Gaussian states with nearly equal covariance matrices. The classical
counterpart is treated in~\cite{Courtade}, and stability of the
quantum Darmois--Skitovich theorem is treated in~\cite{Cuesta}. This quantitative version would have to control the information lost in the vacuum branch and the removal of thermal noise. In the present proof, the effect $G$ has eigenvalues tending to zero, and the inverse characteristic multiplier is unbounded. Exact injectivity therefore does not itself give a stability estimate. Energy-dependent estimates are a natural next question.
\section{Declarations of AI use}
Chat GPT 5.6 Sol was used to explore the design of  doubled circuit and also the proof strategy, especially the proof of Proposition 6. Chat GPT 6 Astra was used to draft the paper and refine the mathematical arguments. The proofs were independently checked and approved by the author, who takes full responsibility for the scientific content of the paper.
\appendix
\section{Auxiliary operator calculations}\label{sec:analytic}

\subsection{Moments and characteristic functions}
The products in~\eqref{eq:moments} are quadratic-form expectations.
Equivalently, $(V_\rho)_{jk}$ is the real part of the Hilbert--Schmidt
inner product between $(R_j-(d_\rho)_jI)\rho^{1/2}$ and
$(R_k-(d_\rho)_kI)\rho^{1/2}$. Their squared Hilbert--Schmidt norms are
finite under the second-moment assumption. This also justifies the
second moment of every scalar quadrature $\xi^TR$.

The Weyl representation is irreducible, so its linear span is
ultraweakly dense in the bounded operators. A trace-class operator
whose expectation against every Weyl operator vanishes is therefore
zero. This proves the uniqueness of characteristic functions used
in the main text.

\subsection{Oscillator entropy bound}
For $\beta>0$, nonnegativity of relative entropy to the Gibbs state gives
\[
 S(\rho)\leq \beta\Tr\rho H_m-m\log(1-e^{-\beta}).
\]
Minimizing over $\beta$ yields~\eqref{eq:energyentropy}.
For zero energy the state is the vacuum. Nonnegativity of the mode
number operators bounds each subsystem energy by the total energy.

\subsection{The heterodyne Fourier identity}\label{sec:FourierProof}

We verify~\eqref{eq:Qfourier}. Define the bounded operator
\begin{equation}
 T(w)=\int_{\C^m}e^{w\cdot\bar z-\bar w\cdot z}M_{\rm het}(\dd z)
\end{equation}
as a weak integral, meaning equality after taking matrix elements; its
integrand is a bounded scalar function against a POVM. The complex Gaussian
integral
$\int_{\C^m}\pi^{-m}e^{-|z|^2+A\cdot z+B\cdot\bar z}\dd^{2m}z
=e^{A\cdot B}$, for $A,B\in\C^m$, gives
\begin{equation}
 \langle u|T(w)|v\rangle
 =\langle u|v\rangle
 e^{\bar u\cdot w-\bar w\cdot v-|w|^2}.
\end{equation}
Normal ordering,
$\D(w)=e^{-|w|^2/2}e^{\sum_jw_ja_j^\dagger}e^{-\sum_j\bar w_ja_j}$,
shows that this is the matrix element of $e^{-|w|^2/2}\D(w)$.
Coherent vectors span a dense subspace, so the two bounded operators are
equal. Taking their expectation in $\rho$ proves \eqref{eq:Qfourier}.

The integral in~\eqref{eq:MP} is a Bochner integral, meaning a norm limit
of integrals of simple trace-class-valued functions. Its integrand is
trace-norm measurable and has norm $Q_\rho(z)$, an integrable probability
density. This establishes existence in trace norm.

\subsection{Integral maps and finite instrument entropies}
\label{sec:instrumentBounds}

For complete positivity of the maps in~\eqref{eq:channelBranches},
it suffices to check every finite-dimensional auxiliary system $R_0$.
If $T_{XR_0}\geq0$ is trace class, its heterodyne operator density is
\[
 A_{R_0}(z)=\pi^{-m}
   (\bra z\otimes I)T_{XR_0}(\ket z\otimes I)\geq0.
\]
Writing $B_z=\pi_z-c(z)\vac\geq0$, the amplified residual map is
\[
 (\Lambda_0\otimes\id_{R_0})(T)
   =\int_{\C^m}B_z\otimes A_{R_0}(z)\,\dd^{2m}z\geq0.
\]
The integral exists in trace norm: the integrand is measurable and
\[
 \int\Tr B_z\,\Tr A_{R_0}(z)\,\dd^{2m}z
 \leq\int\Tr A_{R_0}(z)\,\dd^{2m}z=\Tr T.
\]
It therefore defines a bounded completely positive map on trace-class
operators. The same argument applies to $\Lambda_1$ and their sum.
Only ordinary quantum channels on separable Hilbert spaces are used
in the data-processing steps~\cite[Theorem~1]{MR}.

We now justify the entropy chain rules in
Proposition~\ref{prop:strictMI}, using its notation. Both $M_f$ commute with $H_m$ and satisfy $M_f^2\leq I$,
so
\[
 p_f\Tr(H_m\omega^f_{X'})\leq\Tr(H_m\omega_X)<\infty.
\]
The Gibbs bound makes each $S(\omega^f_{X'})$ finite. The unflagged
output has energy
$\Tr(H_m\zeta_Y)=\tfrac12\Tr(H_m\omega_X)+m$.
Since $\zeta_Y=\sum_f p_f\zeta_Y^f$, each output branch has finite
energy and hence finite entropy as well. Finally,
$\omega_K=\sum_f p_f\omega_K^f$ and entropy concavity imply
$\sum_f p_f S(\omega_K^f)\leq S(\omega_K)<\infty$.
Subadditivity then gives finite entropy for every joint state in
Proposition~\ref{prop:strictMI}, including the states with the finite flag.

\subsection{Minkowski's determinant inequality}\label{sec:detanalytic}

For reference, we give the positive-definite Hermitian case of
Minkowski's determinant inequality~\cite{HJ}, including equality.
Let $H,K$ be positive definite Hermitian matrices of size $d=2m$ and
$0<t<1$. Let $\lambda_1,\ldots,\lambda_d>0$ be the eigenvalues of
$H^{-1/2}KH^{-1/2}$. Set $e_0(\lambda)=1$; for $1\leq k\leq d$ let
$e_k(\lambda)$ be the sum of all products of $k$ distinct eigenvalues.
Arithmetic--geometric mean over those products gives
\begin{equation}
 e_k(\lambda)\geq\binom dk\left(\prod_{j=1}^d\lambda_j\right)^{k/d}.
\end{equation}
Each eigenvalue occurs in $\binom{d-1}{k-1}$ products, which explains the
exponent $k/d$. Expanding the product and applying these inequalities yields
\begin{align}
 \prod_{j=1}^d[(1-t)+t\lambda_j]
 &=\sum_{k=0}^d(1-t)^{d-k}t^k e_k(\lambda)\notag\\
 &\geq\left[(1-t)+t\left(\prod_j\lambda_j\right)^{1/d}\right]^d.
\end{align}
Multiplication by $\det H$ proves
\begin{equation}
 \det((1-t)H+tK)^{1/d}
 \geq(1-t)\det(H)^{1/d}+t\det(K)^{1/d}.
\end{equation}
If equality holds, the $k=1$ arithmetic--geometric mean equality forces all
$\lambda_j$ to be equal (all expansion coefficients are positive).
Equivalently $K=cH$ for a positive scalar $c$; this also suffices for equality.

\section{Gaussian characterization from independent outputs}\label{sec:DS}

The characterization in Proposition~\ref{prop:quantumDS} is established:
it is the finite-energy instance of Cuesta's quantum Darmois--Skitovich
theorem~\cite[Theorem~7, Corollary~8]{Cuesta}. The one-mode predecessor is
\cite[Sec.~III]{Springer}. We retain the proof for completeness; the
reduction to one-dimensional quadrature laws follows the same principle
as Cuesta's proof. The scalar argument is stated first, followed by its application to
quadrature laws.

The scalar statement is the two-variable finite-variance
specialization of the classical Darmois--Skitovich
theorem~\cite{Darmois,Skitovich}; see~\cite{KLR} for the general
statement and its history. The finite-variance proof below is the
standard one based on the second derivatives of the logarithms of the
characteristic functions. Equality of the variances follows also
from the zero covariance of the orthogonal output forms.

\begin{lemma}[Scalar finite-variance Darmois--Skitovich lemma]\label{lem:scalarDS}
Let $X,Y$ be independent real random variables with finite second moments.
Let $a,b>0$ satisfy $a^2+b^2=1$. If $aX+bY$ and $-bX+aY$ are independent,
then $X,Y$ are Gaussian with the same variance, allowing distinct means.
Variance zero is permitted and means a constant random variable.
\end{lemma}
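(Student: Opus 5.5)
We follow the standard finite-variance proof through the logarithms of the characteristic functions. Let $\phi_X(s)=\mathbb E e^{isX}$ and $\phi_Y(t)=\mathbb E e^{itY}$. By independence of $X,Y$, the joint characteristic function of $(U,W)=(aX+bY,-bX+aY)$ is $\phi_X(as-bt)\phi_Y(bs+at)$. Independence of $U,W$ equates this with $\phi_U(s)\phi_W(t)$. Setting $t=0$ and then $s=0$ gives $\phi_U(s)=\phi_X(as)\phi_Y(bs)$ and $\phi_W(t)=\phi_X(-bt)\phi_Y(at)$. The functional equation is therefore
\begin{equation*}
 \phi_X(as-bt)\,\phi_Y(bs+at)=\phi_X(as)\phi_Y(bs)\,\phi_X(-bt)\phi_Y(at)
 \qquad(s,t\in\R).
\end{equation*}

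The first step, and the main obstacle, is to show that $\phi_X$ and $\phi_Y$ never vanish, so that logarithms are available. On a neighbourhood $|s|,|t|<\varepsilon$ both functions are nonzero by continuity, since $\phi(0)=1$. The functional equation lets us enlarge this neighbourhood. Suppose $r_0>0$ is the smallest $|r|$ at which $\phi_X(r)\phi_Y(r)$ vanishes. Because the map $(s,t)\mapsto(as-bt,bs+at)$ is a rotation, every $(x,y)$ with $x^2+y^2$ slightly exceeding the current radius is written using arguments $as,bs,-bt,at$ whose moduli are at most $\max(a,b)\sqrt{s^2+t^2}$. Choose the point so that the right-hand side involves only smaller radii. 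Then the right-hand side is nonzero, while the left-hand side would vanish, which is a contradiction. To make this precise, I would write the rotation at $(x,y)=(r,r)$ and use that $a,b<1$ to control the moduli. As a fallback, I would cite the classical nonvanishing argument of the Darmois--Skitovich theorem~\cite{KLR}.

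Once nonvanishing is established, take continuous logarithms $\psi_X=\log\phi_X$ and $\psi_Y=\log\phi_Y$ with $\psi(0)=0$. These are $C^2$ because the second moments are finite. The logarithmic equation reads
\begin{equation*}
 \psi_X(as-bt)+\psi_Y(bs+at)=\psi_X(as)+\psi_Y(bs)+\psi_X(-bt)+\psi_Y(at).
\end{equation*}
Apply $\partial_s\partial_t$. The right-hand side is a sum of functions of $s$ alone and of $t$ alone, so its mixed derivative is zero. This gives $-ab\,\psi_X''(as-bt)+ab\,\psi_Y''(bs+at)=0$. Since $ab>0$ and the map $(s,t)\mapsto(as-bt,bs+at)$ is onto $\R^2$, we get $\psi_X''(x)=\psi_Y''(y)$ for all independent $x,y$. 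Both second derivatives therefore equal the same constant $-\sigma^2$.

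Finally, the relations $\psi(0)=0$, $\psi'(0)=i\mu$ and $\psi''\equiv-\sigma^2$ give $\phi_X(s)=e^{i\mu_Xs-\sigma^2s^2/2}$ and $\phi_Y(t)=e^{i\mu_Yt-\sigma^2t^2/2}$. Evaluating $\psi''(0)$ identifies $\sigma^2$ with the variance, and it is the same constant for $X$ and $Y$. This shows that $X,Y$ are Gaussian with equal variances and arbitrary means, and the case $\sigma^2=0$ gives constants. Equality of the variances can also be checked directly: $\operatorname{Cov}(U,W)=ab(\operatorname{Var}Y-\operatorname{Var}X)$ must vanish.
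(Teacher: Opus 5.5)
Your overall strategy works, but the one concrete choice you name for the nonvanishing step fails. At $(x,y)=(r,r)$ the preimage is $(s,t)=((a+b)r,(a-b)r)$, which has norm $\sqrt2\,r$. Since $a^2+b^2=1$ forces $\max(a,b)\ge 1/\sqrt2$, your bound $\max(a,b)\sqrt{s^2+t^2}$ is at least $r$. In fact one of $|as|=a(a+b)r$ and $|bs|=b(a+b)r$ is always $\ge r$.

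The repair is to place the zero in one coordinate and $0$ in the other. If $\phi_X(r_0)=0$, take $(x,y)=(r_0,0)$, i.e.\ $(s,t)=(ar_0,-br_0)$. The functional equation then reads
\[
 \phi_X(r_0)=\phi_X(a^2r_0)\,\phi_X(b^2r_0)\,|\phi_Y(abr_0)|^2 ,
\]
and every argument on the right has modulus at most $\max(a^2,b^2,ab)\,r_0<r_0$. This gives the contradiction. Symmetrically, $(s,t)=(br_0,ar_0)$ handles $\phi_Y(r_0)=0$. A smallest positive zero exists by continuity and $\phi(0)=1$. When you pass to logarithms, also note that the two sides of the logarithmic identity differ by a continuous $2\pi i\mathbb Z$-valued function on the connected set $\R^2$. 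That function vanishes at the origin, so it is zero everywhere.

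With that repair your proof is correct, and it differs from the paper mainly in order. The paper never proves global nonvanishing as a separate step. It takes logarithms only on a small neighbourhood of the origin and derives $\ell_X''=\ell_Y''$ there, which gives the Gaussian formulas on $|u|<r$. It then uses exactly the recursion displayed above, and its mirror for the second characteristic function, to push the Gaussian formula itself from $[-R,R]$ to $[-R/q,R/q]$, where $q=\max\{a^2,b^2,ab\}<1$.

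You use the same recursion to get global nonvanishing first. You then take global continuous logarithms and read off the constancy of $\psi''$ on all of $\R$ from surjectivity of the rotation. Your version has a cleaner final step, with no extension argument for the formula. The cost is the global logarithm and the branch bookkeeping on $\R^2$, whereas the paper only ever needs logarithms locally, where their existence is immediate.
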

\begin{proof}
Let $\varphi(t)=\mathbb E e^{itX}$ and $\psi(t)=\mathbb E e^{itY}$.
The two independence assumptions give, for every $s,t\in\R$,
\begin{equation}\label{eq:DSfunctional}
 \varphi(as-bt)\psi(bs+at)
 =\varphi(as)\psi(bs)\varphi(-bt)\psi(at).
\end{equation}
The left side is the joint characteristic function of the output linear
forms; the right side is the product of their marginals.

Finite second moments justify two differentiations under the expectation:
$\varphi'(t)=i\mathbb E[Xe^{itX}]$ and
$\varphi''(t)=-\mathbb E[X^2e^{itX}]$, continuously in $t$, and similarly
for $\psi$. Since both characteristic functions equal one at zero, they
are nonzero on a small interval. There choose twice continuously
differentiable logarithms $\ell_X=\log\varphi$, $\ell_Y=\log\psi$ with $\ell_X(0)=\ell_Y(0)=0$.
On a sufficiently small connected neighborhood of $(0,0)$ the logarithm
of \eqref{eq:DSfunctional} is an exact additive identity: its two sides
can differ only by a continuous $2\pi i\mathbb Z$-valued function, which
vanishes at the origin and therefore throughout that neighborhood.

Differentiate that identity first in $s$ and then in $t$. Terms depending
only on $s$ or only on $t$ drop out, giving
\begin{equation}\label{eq:DShessian}
 -ab\ell_X''(as-bt)+ab\ell_Y''(bs+at)=0.
\end{equation}
The map $(s,t)\mapsto(u,v)=(as-bt,bs+at)$ is invertible and maps an open
neighborhood of the origin onto another such neighborhood. There is
therefore $r>0$ such that $\ell_X''(u)=\ell_Y''(v)$ whenever $|u|,|v|<r$.
Taking $v=0$, then $u=0$, shows that both second derivatives are constant.
At zero they equal minus the respective variances, so
$\operatorname{Var}X=\operatorname{Var}Y=:\sigma^2$. Integration gives
\begin{equation}\label{eq:DSlocal}
 \varphi(u)=e^{i\bar xu-\sigma^2u^2/2},\qquad
 \psi(u)=e^{i\bar yu-\sigma^2u^2/2}
 \quad(|u|<r),
\end{equation}
where $\bar x=\mathbb E X$ and $\bar y=\mathbb E Y$.

Local agreement of characteristic functions is not enough by itself.
To extend it, substitute $(s,t)=(au,-bu)$ and $(bu,au)$ into the original
global identity \eqref{eq:DSfunctional}. Using
$\varphi(-v)=\overline{\varphi(v)}$ and the analogous identity for $\psi$
yields the exact recursions
\begin{align}\label{eq:DSrecursions}
 \varphi(u)&=\varphi(a^2u)\varphi(b^2u)|\psi(abu)|^2,\notag\\
 \psi(u)&=\psi(a^2u)\psi(b^2u)|\varphi(abu)|^2.
\end{align}
Every argument on the right has absolute value at most $q|u|$, where
$q=\max\{a^2,b^2,ab\}<1$. If the Gaussian formulas hold on $[-R,R]$,
these recursions prove them on $[-R/q,R/q]$: the mean coefficient is
$a^2+b^2=1$ and the variance coefficient is
$a^4+b^4+2a^2b^2=1$. Start with any $0<R<r$ and iterate.
The intervals $[-R/q^n,R/q^n]$ exhaust the real line, so both formulas hold
globally. The uniqueness theorem for classical characteristic functions
identifies the two laws as the stated Gaussians.
\end{proof}

\begin{proposition}[Finite-energy quantum characterization]\label{prop:quantumDS}
If independent finite-energy $m$-mode inputs have independent outputs under
a beam splitter with $0<\eta<1$, they are Gaussian with equal covariance
matrices and arbitrary displacement vectors.
\end{proposition}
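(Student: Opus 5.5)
The plan is to reduce to Lemma~\ref{lem:scalarDS} one quadrature direction at a time, and then to reassemble the Gaussian characteristic function from its one-dimensional restrictions. Fix $\xi\in\R^{2m}$ and consider the scalar observable $\xi^TR$. Its spectral measure in $\rho_A$ defines a real random variable $X_\xi$ with characteristic function $t\mapsto\chi_A(t\xi)$, because the Weyl operators $W(t\xi)=e^{it\xi^TR}$ form a one-parameter unitary group generated by $\xi^TR$. Define $Y_\xi$ from $\rho_B$ in the same way. The appendix on moments shows that finite energy gives finite second moments of every scalar quadrature, so $X_\xi$ and $Y_\xi$ have finite variance, equal to $\xi^TV_A\xi$ and $\xi^TV_B\xi$.

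Next I will translate output independence into the hypothesis of the scalar lemma. Set $a=\sqrt\eta$ and $b=\sqrt{1-\eta}$, and evaluate the product identity $\chi_{CE}(u,v)=\chi_C(u)\chi_E(v)$ at $u=s\xi$, $v=t\xi$. By the mixing formula \eqref{eq:BSchi}, the left side is
\[
 \chi_A\bigl((as-bt)\xi\bigr)\,\chi_B\bigl((bs+at)\xi\bigr)
 =\varphi(as-bt)\,\psi(bs+at),
\]
where $\varphi,\psi$ are the characteristic functions of $X_\xi$ and $Y_\xi$, taken independent. The right side is $\varphi(as)\psi(bs)\varphi(-bt)\psi(at)$. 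This is precisely \eqref{eq:DSfunctional}. In other words, $aX_\xi+bY_\xi$ and $-bX_\xi+aY_\xi$ have a joint characteristic function that factorizes, so they are independent. Lemma~\ref{lem:scalarDS} then shows that $X_\xi,Y_\xi$ are Gaussian with the same variance, which gives
\[
 \chi_A(t\xi)=\exp\bigl(it\,d_A^T\xi-\tfrac12 t^2\xi^TV_A\xi\bigr)
\]
and the analogous formula for $B$, together with $\xi^TV_A\xi=\xi^TV_B\xi$. The means here are $d_A^T\xi$ and $d_B^T\xi$, which follow from the first moments.

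Setting $t=1$ and letting $\xi$ range over $\R^{2m}$ gives $\chi_A$ in the form \eqref{eq:GaussianDef}, so $\rho_A$ is Gaussian by uniqueness of characteristic functions, and likewise $\rho_B$. Equality of the quadratic forms $\xi^TV_A\xi=\xi^TV_B\xi$ for all $\xi$, together with symmetry of the matrices, gives $V_A=V_B$ by polarization. The displacements are unconstrained, in agreement with the scalar lemma allowing distinct means.

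The main obstacle is the single step that identifies $t\mapsto\chi_A(t\xi)$ as the classical characteristic function of a random variable with finite variance equal to $\xi^TV_A\xi$. This requires the spectral theorem for the self-adjoint generator $\xi^TR$, and it requires matching the quadratic-form second moment in the appendix with the second moment of the spectral measure. I would handle it through Stone's theorem and the identity $\int x^2\,\mu(\dd x)=\|(\xi^TR)\rho^{1/2}\|_2^2$, which the appendix already shows is finite. Everything after that point is the scalar lemma plus bookkeeping.
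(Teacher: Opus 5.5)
Your proposal is correct and follows the paper's proof: restrict to one quadrature direction $\xi$, apply Lemma~\ref{lem:scalarDS} to the spectral laws of $\xi^TR_A$ and $\xi^TR_B$, then set $t=1$ and polarize to get Gaussianity and $V_A=V_B$. The only cosmetic difference is how you reach \eqref{eq:DSfunctional}. You evaluate the mixing formula \eqref{eq:BSchi} at $(s\xi,t\xi)$, whereas the paper identifies the joint spectral law of the commuting operators $\xi^TR_C,\xi^TR_E$ with that of $(aX+bY,-bX+aY)$; the two routes are equivalent.
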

\begin{proof}
Fix any real vector $\xi\in\R^{2m}$. The self-adjoint quadrature
$\xi^TR_A$ has a classical spectral law in state $\rho_A$: its probability
of a Borel set $B\subseteq\R$ is
$\Tr\rho_A\mathbf1_B(\xi^TR_A)$, where $\mathbf1_B(\xi^TR_A)$ is its
spectral projection. Let $X$ have this law, and let $Y$ have the corresponding
law for $\xi^TR_B$. Their joint law in the product input is the independent
product of these two laws. Both second moments are finite; this follows
from the quadratic-form interpretation of \eqref{eq:moments} and
$\Tr\rho R_j^2<\infty$.

The two operators $\xi^TR_A$ and $\xi^TR_B$ act on different tensor factors,
so their spectral measures commute. Their two linear combinations therefore
have the ordinary joint law of
$aX+bY$ and $-bX+aY$, where $a=\sqrt\eta$, $b=\sqrt{1-\eta}$.
By \eqref{eq:intromixing}, this is also the joint spectral law of
$\xi^TR_C$ and $\xi^TR_E$ in the output state. Product outputs make these
laws independent. Lemma~\ref{lem:scalarDS} applies.

Thus, for each $\xi$, $X$ is Gaussian with mean $\xi^Td_A$ and variance
$\xi^TV_A\xi$, and $Y$ is Gaussian with mean $\xi^Td_B$ and the same
variance. Evaluating their scalar characteristic functions at one gives
\begin{equation}
 \chi_A(\xi)=e^{i\xi^Td_A-\xi^TV_A\xi/2},\qquad
 \chi_B(\xi)=e^{i\xi^Td_B-\xi^TV_B\xi/2}.
\end{equation}
These hold for every $\xi$, so the definition \eqref{eq:GaussianDef}
proves Gaussianity. The equality
$\xi^T(V_A-V_B)\xi=0$ for every real $\xi$ implies $V_A=V_B$ by
polarization of real quadratic forms. This is the finite-energy instance
of the exact quantum Darmois--Skitovich theorem \cite[Theorem~7]{Cuesta}.
\end{proof}

\end{document}